\newtheorem{theo}{Theorem}
\newtheorem{defi}{Definition}
\newtheorem{assumption}{Assumption}
\begin{document}
\setlength\textfloatsep{8pt}
\setlength\intextsep{8pt}
\title{Bayesian Meta-Learning on Control Barrier Functions with Data from On-Board Sensors}

\author{Wataru Hashimoto, Kazumune Hashimoto, Akifumi Wachi, Xun Shen, Masako Kishida, and Shigemasa Takai
        % <-this % stops a space
\thanks{{Wataru Hashimoto, Kazumune Hashimoto, Xun Shen, and Shigemasa Takai are with the Graduate School of Engineering, Osaka University, Suita, Japan (e-mail: hashimoto@is.eei.eng.osaka-u.ac.jp, \{hashimoto, shenxun, takai\}@eei.eng.osaka-u.ac.jp). Akifumi Wachi is with LINE Corporation, Tokyo, Japan (email: akifumi.wachi@linecorp.com). Masako Kishida is with the National Institute of Informatics, Tokyo, Japan (email: kishida@nii.ac.jp).
This work is supported by JST CREST JPMJCR201, Japan and by JSPS KAKENHI Grant 21K14184.
}}% <-this % stops a space
%\thanks{Manuscript received April 19, 2021; revised August 16, 2021.}
}

% The paper headers
\markboth{Journal of \LaTeX\ Class Files,~Vol.~14, No.~8, August~2021}%
{Shell \MakeLowercase{\textit{et al.}}: A Sample Article Using IEEEtran.cls for IEEE Journals}

%\IEEEpubid{0000--0000/00\$00.00~\copyright~2021 IEEE}
% Remember, if you use this you must call \IEEEpubidadjcol in the second
% column for its text to clear the IEEEpubid mark.

\maketitle

\begin{abstract}
%In the control community, the safety of a controlled system is often defined as a set invariance of a safe region in state space. As a means to enforce this safety notion in a dynamical system, Control Barrier Function (CBF) is one of the promising approaches. 
%Most of the CBF-based methods assume that a valid CBF is given in advance to encode safe and unsafe regions. However, in many real applications, CBF is initially unknown and needs to be synthesized online.
%In this paper, we specifically consider the safe navigation of robots in unknown environments and present a method to construct a CBF based on the data collected through equipped sensors.
%To enhance the efficiency of the online execution and theoretically guarantee safety, we specifically employ the Bayesian meta-learning scheme, which enables quick online computation of guaranteed probabilistic bounds for the safe region. The simulation results show the efficacy of the proposed approach. 
In this paper, we consider a way to safely navigate the robots in unknown environments using measurement data from sensory devices. The control barrier function (CBF) is one of the promising approaches to encode safety requirements of the system and the recent progress on learning-based approaches for CBF realizes online synthesis of CBF-based safe controllers with sensor measurements. However, the existing methods are inefficient in the sense that the trained CBF cannot be generalized to different environments and the re-synthesis of the controller is necessary when changes in the environment occur. Thus, this paper considers a way to learn CBF that can quickly adapt to a new environment with few amount of data by utilizing the currently developed Bayesian meta-learning framework. The proposed scheme realizes efficient online synthesis of the controller as shown in the simulation study and provides probabilistic safety guarantees on the resulting controller. %The simulation results show the efficacy of the proposed approach. 
%\textcolor{red}{The additional results are available at: https://arxiv.org/abs/2212.05200}
\end{abstract}

\begin{IEEEkeywords}
Control barrier function, learning-based control, meta-learning. 
\end{IEEEkeywords}

\section{Introduction}
\IEEEPARstart{E}{NSURING} safety while achieving certain tasks is a fundamental yet challenging problem that has drawn significant attention from researchers in the control and robotics communities over the past few decades. In the control community, such safety requirements are often handled by imposing corresponding state constraints on an optimal control problem. Model Predictive Control (MPC) \cite{MPC} and control synthesis methods based on the certificate functions such as Control Lyapunov Function (CLF) and Control Barrier Function (CBF) \cite{CBF-CLF,CLF,CBF-CLF2} are notable examples of such a method. Desirable theoretical results for these methods such as feasibility, stability, and safety are offered for known system dynamics \cite{MPC,CBF-CLF,CLF,CBF-CLF2} or noisy dynamics with known noise distribution \cite{robust MPC, robust CBF1, robust CBF2}.

%In this paper, we specifically consider using CBF as a tool to encode safety requirements. 

%The CBF has been developed as a mean to check whether a given control input makes a safe set $\mathcal{C}$ forward invariant and, If a dynamical system is control-affine, quadratic programming formulation called CBF-QP \cite{CBF-CLF2} is available, which enables lightweight computational cost compared to the many other control methods including MPC of  and enables real-time implementations for many challenging applications such as autonomous vehicles \cite{autonomous} and bipedal locomotion \cite{bipedal}. 

The CBF is a prominent tool to impose forward invariance of a safe set in state space. If a dynamical system is control-affine, quadratic programming formulation called CBF-QP \cite{CBF-CLF2} is available, which can produce safe control inputs much faster than many other control methods including MPC, and enables real-time implementations for challenging applications such as autonomous vehicles \cite{autonomous} and bipedal locomotion \cite{bipedal}. 
%However, most existing works assume a valid CBF function representing a safe region in a state space is given in advance. This assumption is obviously not hold if a robot is required to autonomously operate in an unknown or uncertain environment. In such a situation, a robotic system need to automatically identify the unsafe region on the fly using data from sensory devises to maintain safety.
However, the majority of the existing works presuppose the availability of a valid CBF that represents safe and unsafe regions in state space. This assumption, however, cannot be maintained if a robot is expected to operate autonomously in unknown or uncertain environments. In such situations, it is essential for a robotic system to automatically identify unsafe regions on the fly, employing data from sensory devices. Thus, in this paper, we consider a learning-based method for constructing a CBF based on online measurements obtained from onboard sensors such as LiDAR scanners. %This is challenging in the following three senses:  (1) The online learning procedure of safe and unsafe regions should be conducted in a fast and data-efficient manner for real-time implementation. (2) It is crucial to theoretically ensure that the learned unsafe regions include the whole actual unsafe regions with mild conservativeness, even when the available sensor data is uncertain or scarce. (3) Ideally, we want the learned CBF to be incorporated into ordinary quadratic programming formulation (CBF-QP) for efficient execution.

In the previous studies regarding this line of the topic (e.g., \cite{sample4,lidar1,lidar2}), the robotic system iteratively improves the prediction of the CBF by collecting online sensor measurements without any prior knowledge about the environment or learning task. However, in such cases, even a slight change in the environment necessitates a complete retraining process, which is inefficient and requires a substantial amount of data. To efficiently and effectively learn CBF, adapting it to a new environment with minimal or small amount of training data, we suggest implementing meta-learning \cite{meta1,meta2}, a technique that allows an agent to \textit{learn how to learn}, as opposed to creating it from scratch. During the meta-training process, the model parameters are trained with data collected from various different environments, aiming to learn common underlying knowledge or structure of the learning task. The parameters obtained through meta-learning are then utilized as initial parameters for the online phase in a new environment and updated with online measurements in the environment.
%To deal with them, we propose a Bayesian meta-learning scheme for constructing CBF. Meta-learning is a framework that learns how to adapt to new tasks quickly based on prior experiences over multiple tasks \cite{meta1,meta2}. With this method, we aim to develop an algorithm that can quickly produce the CBF for a new environment with few amount of data by effectively utilizing past data collected from different environments. 
In this paper, we specifically employ a learning scheme combining the recently developed Bayesian meta-learning method \cite{ALPaCA} which is equipped with probabilistic bounds on the prediction \cite{ALPaCA2}, and the Bayesian surface reconstruction method \cite{GPIS1,GPIS3,GPIS4} that can learn the unsafe regions from noisy sensor measurements. 
With this method, we can learn CBF and synthesize a safe controller with few amount of data and provide a probabilistic guarantee on the resulting controller.
%This scheme enables  with few amount of sensor data. Then, the learned CBF is incorporated into the QP formulation, which can efficiently be solved using solvers.

\textbf{Related works on learning and CBF
:} Learning-based approaches for certificate functions such as CBF and CLF are one of the active research topics in the area of intersections between control and learning \cite{review}. In the following, we summarize the recent progress in the topic of learning and CBF. The authors of the works \cite{active1,active2,active3,active4,active6, active8} used CBF for active safe learning of uncertain dynamics. The works \cite{active1,active2} consider machine learning approaches to safely reduce the model uncertainty and show the methods yield empirically good performances, while the Gaussian Process (GP) is also used in \cite{active3,active4,active6,active8} and rigorous theoretical results for safety requirements are offered under certain assumptions.
%For example, in \cite{active3}, both the drift and input gain terms in the control affine system are learned with Gaussian Process Regression (GPR) and CBF-CLF conditions that guarantee stability and safety despite the existence of uncertainty in the prediction are derived. 
CBF is also used in Imitation Learning (IL) and Reinforcement Learning (RL) \cite{IL1,IL2,RL1} to account for safety concerns. %In the RL settings, the safety of model-free RL is guaranteed by employing CBF as an additional layer that maps the action determined by RL onto a safe set \cite{RL1,RL2,RL3} or augmenting cost-to-go function with CBF term \cite{RL3}. The IL setting \cite{IL1} proposes a way to theoretically ensure the safety of the IL controller by imposing CBF-based conditions on the expert controllers.

While the above studies assume a valid CBF is given and consider leveraging it to impose safety conditions, several previous studies also consider parameterizing CBF itself by Neural Network (NN) and learning it to be a valid barrier function \cite{sample1,demonstration,sample2,sample3,sample4, lidar1,lidar2}. %Developing such a framework is important since manually specifying certificate functions is not always a trivial endeavor and requires detailed knowledge about the underlying systems and safety requirements.
The work \cite{sample1} considers jointly learning NNs representing CBF and CLF as well as a control policy based on safe and unsafe samples. The method proposed in \cite{demonstration} considers recovering CBF from expert demonstrations and theoretically guarantees that the learned NN meets the CBF conditions under the Lipsitz continuity assumptions. %on underlying system dynamics, control row, and CBF. 
In \cite{sample4}, GP is also used to synthesize CBF.
The works most related to this paper are \cite{lidar1,lidar2} which propose ways to learn CBF with sensor measurements. In \cite{lidar1}, Support Vector Machine (SVM) classifier representing CBF is trained with safe and unsafe samples constructed by using the LiDAR measurements. %The authors of this work proposed both offline and online training schemes for learning the unsafe region and show practical usefulness with a simulation. 
Similarly, the authors of the work \cite{lidar2} present a synthesis method using Signed Distance Function (SDF). %and show the original CBF-CLF-QP formulation becomes a Second Order Cone Program (SOCP) when the effect of worst-case estimation error in SDF is considered. %To efficiently train the SDF, the scheme called ``incremental training"  which updates parameters only with newly obtained data is also introduced.
%The work \cite{lidar3} synthesizes an observation-feedback controller which maps measurements received from LiDAR to certified control inputs. 

\textbf{Related works on meta-learning and control
:}
Meta-learning, also known as ``learning to learn," is a  technique capable of learning from previous learning experiences or tasks in order to improve the efficiency of future learning \cite{meta1}. 
Meta-learning has recently become increasingly popular in the control literature. In several previous studies such as \cite{ALPaCA2,metaMPC}, meta-learning is used to learn system dynamics to enable quick adaptation to the changes in surrounding situations. In these works, the trained models are incorporated into MPC which requires relatively heavy online computation.
Meta-learning is also used in policy learning in the context of adaptive control \cite{metaAdapt} and reinforcement learning \cite{meta1,metaRL2}.

\textbf{Contributions:}
%The contributions of this paper and comparison to the existing methods are as follows.
Our contributions compared to the existing methods are threefold:
%First, we propose a Bayesian meta-learning scheme on CBF with sensor measurements. 
First, compared to the methods that synthesize CBF with sensor measurements \cite{sample4,lidar1,lidar2} which rely on traditional supervised learning, the proposed meta-learning scheme can effectively use past data collected from different environments and produce a prediction of CBF with few amount of data. Consequently, the resulting control scheme realizes less conservative control performance with few amount of data (small number of online CBF updates) as we see in the case study. %Thus, we can save a number of updates of CBF prediction and the time required for the training, which reduces the computational burden and use of sensor devices that have limited batteries. 
%and quickly and flexibly produce CBF for the current environment with few amount of data points.
%Compared to the existing methods that synthesize CBF with sensor measurements \cite{sample4,lidar1,lidar2} which rely solely on online or offline supervised training, the proposed method realizes a tight and probabilistically guaranteed prediction of unsafe regions with a few amount of online data once the offline meta-training using data in different settings is done. Thus, we can save a number of updates for CBF prediction, which reduces the online computational burden and use of sensor devices that have limited batteries. Moreover, as we can see in the case study, the online update in the proposed method is much faster than in the case of using GP without meta-learning.
Second, different from the previous NN-based CBF synthesis methods \cite{sample1,sample2,sample3,lidar1,lidar2}, our method can readily take into account uncertainty in data and calculate formal probabilistic bounds on CBF. Based on this, we provide a probabilistic safety guarantee on the proposed control scheme. Although the work \cite{lidar2} takes into account the prediction error in SDF, such an error bound is assumed to be given and actual computation is not provided. 
Third, the learned CBF can readily be incorporated into ordinary QP formulation which can be efficiently solved. Thus, our method can produce control input much faster than the previous methods regarding meta-learning and MPC such as \cite{ALPaCA2,metaMPC}.
%3) We provide a theoretical analysis on the proposed control scheme and show the proposed method makes the safe region forward invariance with high probability under certain assumptions.
%3) We evaluate the efficacy of our proposed method through a simulation using a physical simulator PyBullet \cite{pybullet}.
%we propose a method to efficiently identify the unsafe region with a small amount of measurement data. In the previously developed schemes \cite{}, the efficiency of .
%Second, our method can theoretically guarantee the satisfaction of safety constraints .

\textbf{Notations:}
A continuous functions $\alpha_1:\mathbb{R}_{\geq 0}\rightarrow \mathbb{R}_{\geq 0}$ and $\alpha_2:\ \mathbb{R}\rightarrow \mathbb{R}$ are class $\mathcal{K}$ and extended class $\mathcal{K}_\infty$ function, respectively, if they are strictly increasing with $\alpha_1(0)=0$ and $\alpha_2(0)=0$. $\mathcal{X}_d^2(p)$ is the $p$-th quantile of the $\mathcal{X}^2$ distribution with $d$ degrees of freedom. The maximum and minimum eigenvalues for any positive definite matrix $A$ are defined as $\bar{\lambda}(A)$ and $\underline{\lambda}(A)$, respectively. For vector fields $f$ and a function $F$, $L_f F(x):=\frac{\partial F(x)}{\partial x}f(x)$ is the Lie derivatives of the function $F$ in the direction of $f$.
\section{Problem Statement}\label{problem}
%In this section, we describe the dynamics of the system and syntax, semantics and the robustness of the STL formulas. 

Throughout this paper, we consider the following control affine system: 
\begin{align}\label{system}
    \dot{x} &= f(x)+g(x)u,
\end{align}
where $x\in \mathcal{D} \subset \mathbb{R}^n$ and $u\in \mathcal{U}:=\{ u\in \mathbb{R}^m\mid  Au \leq b \}$ with $A\in \mathbb{R}^{m\times m}$ and $b\in \mathbb{R}^m$ are the admissible system states and control inputs respectively. The functions $f: \mathbb{R}^n  \rightarrow \mathbb{R}^n$ and $g: \mathbb{R}^n \rightarrow \mathbb{R}^{n\times m}$ are continuously differentiable functions representing dynamics of a robotic system (e.g., ground vehicles or drones) equipped with sensors such as LiDAR. 
The state space is assumed to contain \textit{initially unknown} $N_{\mathrm{obs}}$ obstacles and the \textit{outside} of each obstacle is defined by the following:
\begin{align}\label{cbf}
    \mathcal{C}_{\xi_i} = \{ x\in \mathcal{D}\subset \mathbb{R}^n\mid h(x;\xi_i)\geq 0 \},\ i \in \mathbb{N}_{1: N_{\mathrm{obs}}},
\end{align}
where $h(\cdot;\xi):\mathbb{R}^n\rightarrow \mathbb{R}$ is a continuously differentiable function with \textit{unobserved} latent variable $\xi$ that encodes the geometric information about an obstacle, e.g., the position and shape of an obstacle. We assume realizations of the variable, $\xi_i$, $i\in\mathbb{N}_{1: N_{\mathrm{obs}}}$ are determined by samples from a probabilistic distribution $p(\xi)$ and they are fixed during the control execution. For simplicity, the obstacles are chosen not to overlap with each other. 

Our goal is to synthesize a controller that achieves a given task (e.g., goal-reaching task) without deviating from the safe region $\mathcal{C}=\bigcap_{i=1}^{N_{\mathrm{obs}}}\mathcal{C}_{\xi_i}$ given initial state $x(0)\in \mathcal{C}$. 
Since the safe/unsafe regions are initially unknown, the robotic system needs to identify them based on the sensor measurements. %A simple approach toward this end  is to learn them using supervised learning with online data. However, this strategy leads to heavy online computation. Thus, we specifically consider a way to mitigate the online computational burden by effectively using past measurement data regarding the different settings sampled from the distribution $p(\xi)$. 
%Since there is no a priori knowledge regarding the obstacle regions (\ref{cbf}), the robotic system needs to identify them based on the sensor measurements. 
%A naive approach toward this end is to employ supervised learning for a fixed environment using measurement data collected from that environment. 
A naive approach toward this end is to use supervised learning based on measurement data obtained under a pre-determined environment. 
However, in this case, the trained model cannot be generalized to different environments, and even slight changes in the environment necessitate a complete re-fitting of the model, which is not only inefficient but also demands a substantial amount of data.
%and complete re-fitting of the model is needed whenever (even slight) changes in the environment occur, which is inefficient and requires a large amount of data. 
Thus, this paper considers a way to effectively use past data regarding the different environments sampled from the distribution $p(\xi)$ to quickly learn safe/unsafe regions in a new environment and synthesize the safe controller with few amount of online data. 

In the following, we first introduce the control notions suited for our purpose called Control Barrier Function (CBF) and Control Lyapunov Function (CLF) in Section \ref{sec:CBF}. Then, the meta-learning procedures of CBF and a way to synthesize QP with the learned CBF are discussed in Section \ref{sec:meta} and \ref{sec:control}.
\section{CBF-CLF-QP}\label{sec:CBF}

%We define safety as the forward-invariance of a safe set $\mathcal{C}$ defined in Section \ref{problem} (i.e., the system (\ref{system}) is considered to be safe if for all $t\geq 0$, $x(t)\in \mathcal{C}$ when $x(0)\in \mathcal{C}$). 
%Here, safety is defined as the forward-invariance of a safe set $\mathcal{C}$ which is defined as the super zero level set of a function $h$.
In this section, we first introduce the Zeroing Control Barrier Function (ZCBF) as a tool to enforce safety on the system (\ref{system}).
%we here introduce the Zeroing Control Barrier Function (ZCBF) \cite{CBF-CLF,CBF-CLF2}. 
The following discussion is a summary of the works \cite{CBF-CLF,CBF-CLF2}. In ZCBF, the safety is defined as the forward-invariance of a safe set $\mathcal{C}$ which is defined by the super zero level set of a function $h: \mathbb{R}^n\rightarrow \mathbb{R}$, i.e., the system (\ref{system}) is considered to be safe if $x(t)\in \mathcal{C}:=\{ x \mid h(x)\geq 0 \}$ holds for all $t\geq 0$ when $x(0)\in \mathcal{C}$.
  %that is strictly increasing and holds $\alpha_C(0)=0$, $\lim_{c\rightarrow \infty}\alpha_C(c)=\infty$, and $\lim_{c\rightarrow -\infty}\alpha_C(c)=-\infty$.
Given an extended class $\mathcal{K}_\infty$ function $\alpha_C: \mathbb{R}\rightarrow \mathbb{R}$, the ZCBF is defined as follows.
\begin{defi}
Let $\mathcal{C} \subset \mathcal{D} \subset \mathbb{R}^n$ be the zero super level set of a continuously differentiable function $h:\ \mathbb{R}^n\rightarrow \mathbb{R}$. Then, the function $h$ is a zeroing control barrier function (ZCBF) for (\ref{system}) on $\mathcal{C}$ if there exists an extended class $\mathcal{K}_\infty$ function $\alpha_C$ such that, for all $x\in \mathcal{D}$ the following inequality holds.
\begin{align}\label{cbf-condition}
    \sup_{u\in \mathbb{R}^m} \left\{ L_f h(x) + L_g h(x)u + \alpha_C(h(x)) \right\}\geq 0.
\end{align}
%where $L_f h(x)$ and $L_g h(x)$ are the Lie derivatives of the function $h$ in the direction of the vector fields $f$ and $g$ (i.e., $L_f h(x)=\frac{\partial h(x)}{\partial x}f(x)$ and $L_g h(x)=\frac{\partial h(x)}{\partial x}g(x)$). 
We additionally define the set of control inputs that satisfy the CBF condition as follows:
\begin{align}
U_C(x) = \{ u\in \mathbb{R}^m\mid L_f h(x) + L_g h(x)u + \alpha_C(h(x)) \geq 0\}.\notag
\end{align}

\end{defi}
%If the function $h$ is a valid CBF for (\ref{system}) on $\mathcal{C}$, it can be shown that there always exists a control input that renders the set $\mathcal{C}$ forward invariant along with the system (\ref{system}) as summarized in the following theorem.
If the function $h$ is a valid CBF for (\ref{system}) on $\mathcal{C}$, the safety of the system (\ref{system}) is guaranteed as the following theorem.

\begin{theo}[\cite{CBF-CLF2}]\label{theo:cbf}
Let $\mathcal{C}\subset \mathcal{D}\subset \mathbb{R}^n$ be a super zero level set of a continuously differentiable function $h:\ \mathbb{R}^n\rightarrow \mathbb{R}$ and the function $h$ is ZCBF for (\ref{system}). Then, any policy that selects control inputs from $U_C(x)$ renders the safe set $\mathcal{C}$ forward invariant for the system (\ref{system}). 
\end{theo}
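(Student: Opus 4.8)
The plan is to translate the pointwise CBF inequality into a scalar differential inequality along closed-loop trajectories and then invoke the Comparison Lemma. First I would fix any policy with $u(x(t))\in U_C(x(t))$ and consider the resulting trajectory $x(t)$ with $x(0)\in\mathcal{C}$. By the definition of $U_C(x)$, for every $t$ we have $L_f h(x(t)) + L_g h(x(t)) u(x(t)) + \alpha_C(h(x(t)))\geq 0$. Since along the dynamics (\ref{system}) the chain rule gives $\dot{h}(x(t)) = \frac{\partial h(x(t))}{\partial x}\dot{x}(t) = L_f h(x(t)) + L_g h(x(t)) u(x(t))$, this yields the differential inequality $\dot{h}(x(t))\geq -\alpha_C(h(x(t)))$ holding for all $t\geq 0$.

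Next I would introduce the scalar comparison system $\dot{y} = -\alpha_C(y)$ with initial condition $y(0) = h(x(0))\geq 0$. The crucial observation is that $y=0$ is an equilibrium of this ODE because $\alpha_C(0)=0$, and since $\alpha_C$ is an extended class $\mathcal{K}_\infty$ function (hence strictly increasing) we have $\alpha_C(y)>0$ for $y>0$. Consequently the nonnegative half-line $\{y\geq 0\}$ is forward invariant for the comparison system: a solution starting at $y(0)\geq 0$ can only decay monotonically toward the origin but cannot cross it, so $y(t)\geq 0$ for all $t\geq 0$.

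Finally I would apply the Comparison Lemma. Since $h(x(0)) = y(0)$, and $h(x(\cdot))$ satisfies the differential inequality $\dot{h}\geq -\alpha_C(h)$ while $y(\cdot)$ solves the corresponding equality with the same initial value, it follows that $h(x(t))\geq y(t)\geq 0$ for all $t\geq 0$. Hence $x(t)\in\mathcal{C}=\{x\mid h(x)\geq 0\}$ for all $t\geq 0$, i.e., $\mathcal{C}$ is rendered forward invariant, which is the claim.

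The main obstacle I anticipate is the regularity needed to justify the Comparison Lemma rigorously: both the lemma itself and the uniqueness argument underlying the invariance of $\{y\geq 0\}$ require $\alpha_C$ to be locally Lipschitz. If one only assumes $\alpha_C$ continuous and strictly increasing, I would instead run a direct contradiction argument that sidesteps uniqueness. Suppose $h(x(t^\ast))<0$ for some $t^\ast>0$ and let $t_1$ be the last instant prior to $t^\ast$ at which $h(x(\cdot))$ equals zero (which exists by continuity since $h(x(0))\geq 0$). On $(t_1,t^\ast]$ one has $h<0$, whence $\alpha_C(h)<0$ so that $-\alpha_C(h)>0$, and the differential inequality forces $\dot{h}>0$ there; thus $h$ is strictly increasing on $(t_1,t^\ast]$, contradicting $h(x(t^\ast))<0$. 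This boundary-type argument recovers invariance under the weaker hypothesis at the cost of a slightly more delicate real-analysis step.
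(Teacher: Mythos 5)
The paper does not prove this theorem at all: it is imported verbatim from the cited reference \cite{CBF-CLF2}, so there is no in-paper argument to compare against. Your proof is the standard one used in that reference (and its antecedents): convert the ZCBF inequality into the scalar differential inequality $\dot{h}(x(t))\geq -\alpha_C(h(x(t)))$ and apply the Comparison Lemma against $\dot{y}=-\alpha_C(y)$, whose nonnegative half-line is invariant because $\alpha_C(0)=0$. The argument is correct, and you rightly flag the only real technical subtlety --- the Comparison Lemma needs $\alpha_C$ locally Lipschitz --- and your fallback contradiction argument (take $t_1$ the last zero of $h\circ x$ before a hypothetical negative excursion, note $\dot h>0$ on $(t_1,t^\ast]$, integrate to get $h(x(t^\ast))>h(x(t_1))=0$) correctly closes that gap under mere continuity. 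The one hypothesis you leave implicit is forward completeness of the closed-loop trajectory (so that $x(t)$ is defined for all $t\geq 0$), which the cited source also assumes tacitly.
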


In this paper, the control objectives, except for the safety requirements, are encoded through the control Lyapunov function (CLF) which is given in advance. With a class $\mathcal{K}$ function $\alpha_V$, the definition of CLF is formally given as follows \cite{CBF-CLF}. %(i.e., $\alpha_V$ is strictly increasing and holds $\alpha_V(0)=0$ and $\lim_{c \rightarrow \infty} \alpha_V(c)=\infty$), 
\begin{defi}
A continuously differentiable function $V:\ \mathbb{R}^n\rightarrow \mathbb{R}_{\geq 0}$ is a CLF if $V(x_e)=0$ with equilibrium point $x_e\in \mathcal{D}$ and there exists a class $\mathcal{K}$ function $\alpha_V: \mathbb{R}\rightarrow \mathbb{R}$ such that, for all $x\in \mathcal{D}\subset \mathbb{R}^n$ the following inequality holds.
\begin{align}\label{clf-condition}
    \inf_{u\in \mathbb{R}^m} \left\{ L_f V(x) + L_g V(x)u + \alpha_V(V(x)) \right\}\leq 0.
\end{align}
%where $L_f V(x)$ and $L_g V(x)$ are the Lie derivatives of the function of $V$ in the direction of the vector fields $f$ and $g$ (i.e., $L_f V(x)=\frac{\partial V(x)}{\partial x}f(x)$ and $L_g V(x)=\frac{\partial V(x)}{\partial x}g(x)$). 
\end{defi}
Given a CLF $V$, we consider the set of all control inputs that satisfy (\ref{clf-condition}) for a state $x\in \mathcal{D}$ as
\begin{align}
U_V(x) = \{ u\in \mathbb{R}^m\mid L_f V(x) + L_g V(x)u + \alpha_V(V(x)) \leq 0\}.\notag
\end{align}
%If a function $V$ is CLF, then the control input within (\ref{control set2}) can stabilize the system (\ref{system}) as summarized in the following theorem.
If a function $V$ is CLF, the stability of the equilibrium point $x_e$ is guaranteed as the following theorem.
\begin{theo}[\cite{CLF}]
Let a continuously differentiable function $V:\ \mathbb{R}^n\rightarrow \mathbb{R}_{\geq 0}$ be CLF for (\ref{system}). Then, any policy that selects control inputs from $U_V(x)$ renders the equilibrium point $x_e$ of the system (\ref{system}) asymptotically stable. 
\end{theo}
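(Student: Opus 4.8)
The plan is to verify that $V$ is a bona fide Lyapunov function for the closed loop produced by any admissible policy, and then invoke Lyapunov's direct method. First I would fix any policy $\pi$ with $\pi(x)\in U_V(x)$ for all $x\in\mathcal{D}$, assume the resulting closed loop $\dot{x}=f(x)+g(x)\pi(x)$ admits solutions, and differentiate $V$ along a trajectory. By the chain rule and the definition of the Lie derivatives, $\dot{V}(x)=L_f V(x)+L_g V(x)\pi(x)$, so the membership $\pi(x)\in U_V(x)$ immediately yields, via \req{clf-condition}, the differential inequality $\dot{V}(x)\le -\alpha_V(V(x))$ along every closed-loop trajectory.

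Next I would exploit that $\alpha_V$ is a class $\mathcal{K}$ function, so $\alpha_V(0)=0$ and $\alpha_V(s)>0$ for $s>0$. Together with $V\ge 0$, $V(x_e)=0$, and positive definiteness of $V$ about $x_e$, the bound $\dot{V}\le -\alpha_V(V)$ gives $\dot{V}(x)\le 0$ everywhere and $\dot{V}(x)<0$ for every $x\neq x_e$, which are exactly the hypotheses of Lyapunov's direct method and deliver stability of $x_e$ together with attractivity. To make attractivity fully rigorous I would prefer the comparison lemma rather than the negative-definiteness criterion: let $y(\cdot)$ solve the scalar problem $\dot{y}=-\alpha_V(y)$ with $y(0)=V(x(0))$; since $\alpha_V$ is class $\mathcal{K}$ the origin of this scalar system is asymptotically stable, hence $y(t)\to 0$, and the comparison principle applied to $\dot{V}\le -\alpha_V(V)$ gives $0\le V(x(t))\le y(t)\to 0$. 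Positive definiteness of $V$ then forces $x(t)\to x_e$, establishing asymptotic stability.

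The differentiation step is routine, so the main obstacle is the two standing regularity issues that the clean statement suppresses. First, one must ensure that the selected feedback produces well-defined (classical or Carath\'eodory) solutions so that the differential inequality can actually be integrated; this is where a continuous selection such as Sontag's universal formula, or a local Lipschitz assumption on the policy, is tacitly needed. Second, the conclusion of asymptotic stability of $x_e$ rests on $V$ being positive definite about $x_e$, which is the usual implicit ingredient of the CLF definition beyond the stated $V(x_e)=0$. Once these are granted, the argument reduces to the standard Lyapunov and comparison-lemma machinery underlying \cite{CLF}.
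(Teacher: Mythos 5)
The paper does not prove this theorem: it is imported verbatim from the cited reference \cite{CLF} and stated without proof, so there is no in-paper argument to compare against. Your proof is the standard one underlying that citation --- differentiate $V$ along the closed loop, use membership in $U_V(x)$ to obtain $\dot{V}\le-\alpha_V(V)$, and conclude via the comparison lemma --- and it is essentially correct. You are also right to flag the two suppressed hypotheses: the paper's definition of a CLF only asserts $V\ge 0$ and $V(x_e)=0$, not positive definiteness about $x_e$, and nothing in the statement guarantees that a measurable selection from $U_V(x)$ yields well-posed closed-loop solutions; both are genuine gaps in the statement as written that any complete proof must patch exactly as you describe.
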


Having CBF and CLF defined above, we can synthesize a controller satisfying CBF and CLF conditions (\ref{cbf-condition}) and (\ref{clf-condition}) with quadratic programming (QP) as follows: %\footnote{Although the discussion in this section holds only for relative degree one system, the extension for high relative degree systems \cite{CBF-CLF2} can readily be applied to the proposed method discussed in the following section.}:
\begin{subequations}\label{qp}
\begin{align}\
    &\min_{(u,\epsilon)\in \mathbb{R}^{m+1}}\  \frac{1}{2} u^\top H u+\lambda \epsilon^2, \\
    & \mathrm{s.t.} \ L_fV(x)+L_gV(x)u+\alpha_V(V(x))\leq \epsilon, \label{qp clf}\\
     & \ \ \ \ \ L_fh(x)+L_gh(x)u+\alpha_C(h(x))\geq 0 \label{qp cbf} \\
     & \ \ \ \ \ Au \leq b,
\end{align}
\end{subequations}
where $H\in \mathbb{R}^{m\times m}$ is a positive definite matrix and $\lambda\ (>0)$ is a tunable parameter for relaxing the CLF condition (\ref{qp clf}). The relaxation of the CLF condition ensures the solvability of the QP. %The relaxation of the CLF condition allows the system to temporarily break the stability condition, if necessary, in order to ensure safety. 
If the safety is defined by multiple CBFs, we can handle them by adding the corresponding CBF constraints to the optimization problem.
\section{Bayesian Meta-Learning of CBF}\label{sec:meta}
In this section, we consider learning the safe region (\ref{cbf}) based on the measurement data from sensor devices. 
To this end, we employ a technique based on the Gaussian process implicit surfaces (GPIS) \cite{GPIS1,GPIS3,GPIS4} which is developed for non-parametric probabilistic reconstruction of object surfaces. 
Note that we consider the case of $N_{\mathrm{obs}}=1$ for a while, and the case of multiple obstacles is discussed later in the next section. In this paper, the implicit surfaces (IS) are defined by the distance from the surface of objects to a certain point $z$ in 2D or 3D space, and the inside and outside of the object surfaces are interpreted through the sign of the function as follows. 
\begin{align}\label{IS}
    h_{\mathrm{IS}}(z;\xi) = \left\{
\begin{array}{ll}
d_\xi(z) & \text{if $z$ is outside the obstacle}\\
0 & \text{if $z$ is on the surface}\\
-d_\xi(z) & \text{if $z$ is inside the obstacle}
\end{array}
\right.
\end{align}
where $z\in \mathbb{R}^b$, $b\in \{ 2,3\}$ is a position in a 2D or 3D space and $d_\xi:\mathbb{R}^b\rightarrow \mathbb{R}_{\geq 0}$ is the function that returns the minimum Euclidean distance between $z$ and surface of the obstacle corresponding to $\xi \sim p(\xi)$. Though this paper focuses on a 2D environment, our method can easily be extended to a 3D case.
%In this paper, we assume the function $h_{\mathrm{IS}}$ follows GP, which enables us to deal with measurement noise and scarcity in data. 
In GPIS, the function $h_{\mathrm{IS}}$ is estimated with the Gaussian process (GP), which enables us to deal with uncertainty arising from measurement noise and scarcity of data. The GP is estimated with a dataset $\{ (z^i,h_{\mathrm{IS}}(z^i,\xi)+ \varsigma_i)\}_{i=1}^{n_\mathrm{data}}$, where $ \varsigma_i$ is bounded $\sigma$-subgaussian noise and $n_\mathrm{data}$ is the number of data.

In the following, we explain how to construct a dataset and apply Bayesian meta-learning to the training of $h_{\mathrm{IS}}$.
\subsection{Dataset Construction}\label{dataset}
\begin{figure}[tb]
 \begin{center}
  \includegraphics[width=1\hsize]{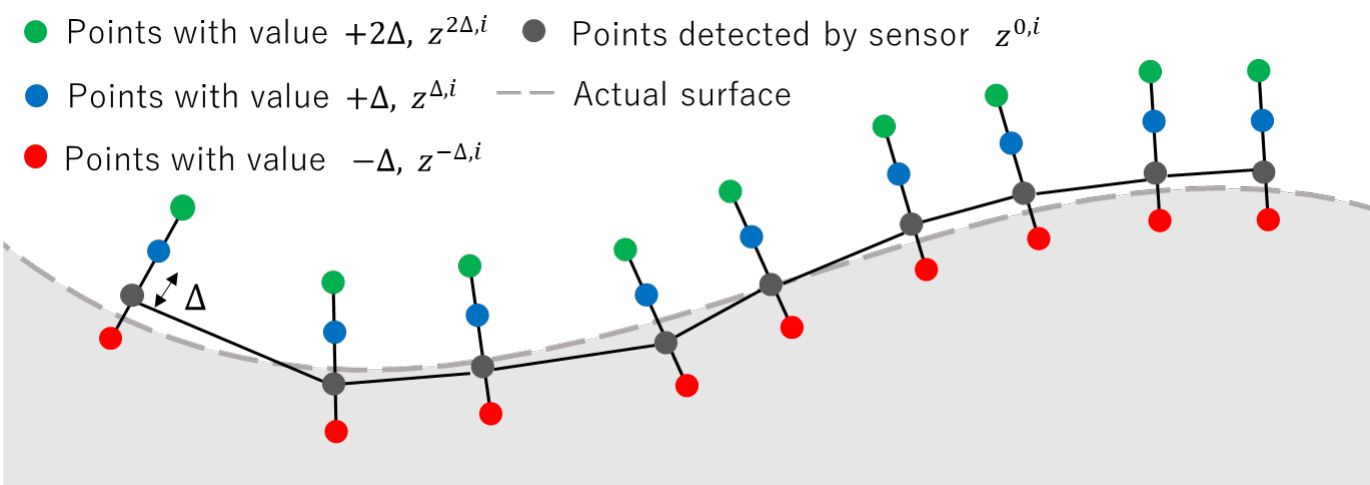}
 \end{center}
 \caption{The construction of data points from noisy sensor measurements: The gray dots are the points on the surface detected by the sensors. The green, blue, and red dots represent the points that have distance $2\Delta$, $\Delta$, $-\Delta$ from the surface (negative value means inside of the obstacle) which are calculated by using surface normal approximation explained in Section \ref{dataset} (we show the case of $n_-=1$ and $n_+=2$).}
 \label{fig:data}
\end{figure}
In this subsection, we explain how to construct a dataset to learn implicit surfaces (\ref{IS}) based on the measurement data from sensors such as LiDAR. A pictorial image for obtaining training data from noisy sensor measurements is shown in Fig. \ref{fig:data}. %We first suppose that the sensor produces a finite depth reading if it detects an unsafe region, and an infinite depth reading otherwise. 
First, assume we have a point cloud on the surface $\mathcal{P}^0= \cup_{i=1}^{n_{\mathrm{surf}}}\{ z^{0,i} \}$, where $n_{\mathrm{surf}}\in \mathbb{Z}_{>0}$ is the number of surface points and $z^{0,i}\in \mathbb{R}^2$, $\forall i\in \mathbb{N}_{1:n_{\mathrm{surf}}}$ are the world Cartesian coordinates of surface points computed with the raw depth readings produced by sensors and current position of the robot. Then, for each surface point $z^{0,i}$, the normal to the surface of the obstacle at that point is approximated by the perpendicular to the segment between the point and the nearest neighbor point within the same scan \cite{GPIS4}. Under the assumption that the sensor measurements on the surface are dense enough, this method yields accurate approximations. 
For each surface point $z^{0,i}$ and corresponding approximated normal to the surface, we construct a set of points $\cup_{p=-n_-}^{n_+}\{ z^{p\Delta, i} \}$ with $\Delta \in \mathbb{R}_{>0}$ and $n_+,n_-\in\mathbb{Z}_{\geq0}$, where $z^{p\Delta, i}$, $\forall p\in \mathbb{Z}_{-n_-:n_+}$ are the representations of the points that are $p\Delta$ away from the surface of the obstacle (see Fig \ref{fig:data}). A negative $p$ value means that the point $z^{p\Delta, i}$ is inside the obstacle.
%Here, $ z^{p\Delta, i}$ are the representations of the points that $p\Delta$ away from the surface of the obstacle (if $p$ is negative, $z^{p\Delta, i}$ is a representation of a point within the obstacle, see Fig. \ref{fig:data}). 
%Each of these points is obtained based on the approximation of the normal surface vector which is calculated by the perpendicular vector to the line segment connecting to the nearest neighbor point within the same scan, similar to \cite{GPIS4}. 
Then, we construct the data set to train $h_{\mathrm{IS}}$ as $D=\cup_{i=1}^{n_{\mathrm{surf}}}\cup_{p=-n_-}^{n_+}\{(z^{p\Delta,i},p\Delta)\}$. 
\begin{comment}
Since adding all the data points in $D$ to the buffer cause a large computational burden in the training, we select the informative data points to be added to the buffer $\mathcal{B}$ based on the criteria explained later in this section. In the online control execution, the data points in the current buffer $\mathcal{B}$ are used to predict the function $h_{\mathrm{IS}}(\cdot,\xi)$. Note that we assume the data points are corrupted by white Gaussian noise with variance $\sigma\in \mathbb{R}_{\geq0}$ which is assumed to be known.
\end{comment}
\subsection{Bayesian meta-learning of the function $h_{\mathrm{IS}}$}
\begin{figure*}[tb]
 \begin{center}
  \includegraphics[width=0.95\hsize]{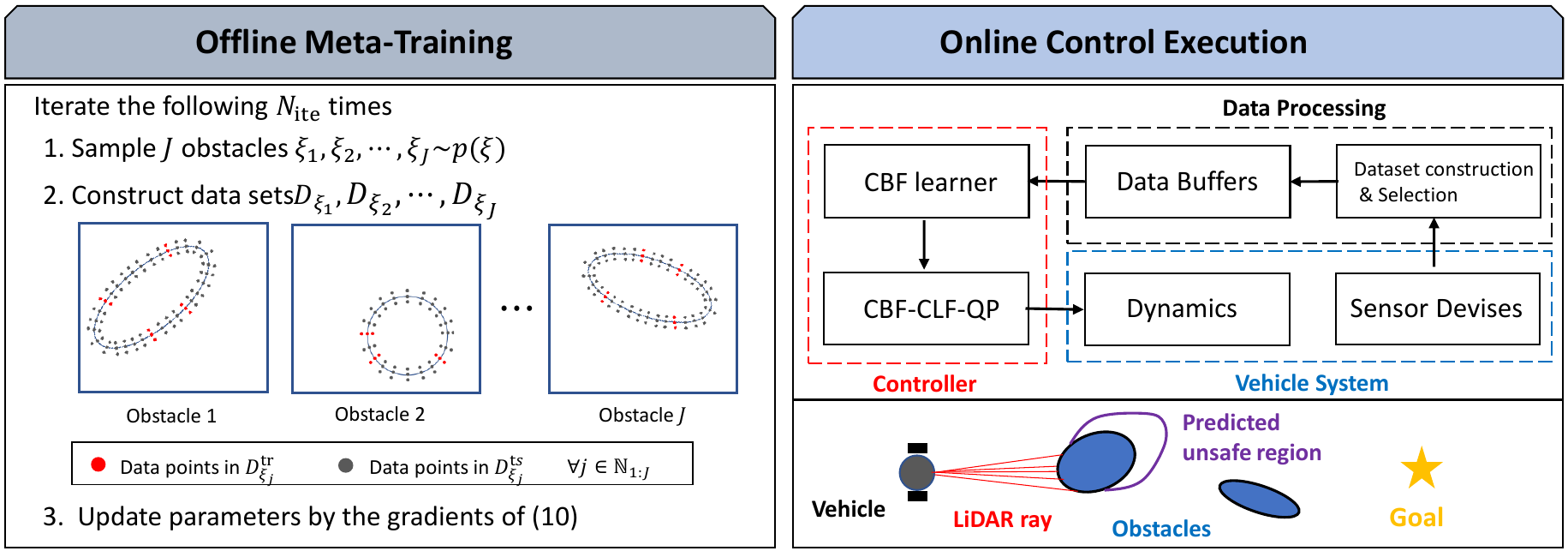}
 \end{center}
 \caption{The summation of the proposed approach: The proposed method consists of \textit{offline meta-training} and \textit{online control execution}. The offline training follows the procedures in Section \ref{offline}. In each iteration, we first sample $J$ obstacles from the distribution $p(\xi)$ and construct the datasets $D_{\xi_j}^{\mathrm{tr}}$ and $D_{\xi_j}^{\mathrm{ts}}$ for each of these obstacles. Then, the loss (\ref{loss}) of the prediction made by the model adopted from $D_{\xi_j}^{\mathrm{tr}}$ is evaluated for test data $D_{\xi_j}^{\mathrm{ts}}$ and the parameters are updated with the gradients of the loss. 
 In the online control execution, the control input is generated by CBF-CLF-QP (\ref{qp}) with the CBFs obtained through the adaptation of the posterior parameters (\ref{post}) and the probabilistic bounds (\ref{cbf pred}). The dataset for the adaptation is constructed using the sensor measurements obtained during the operation by following procedures in Section \ref{dataset} and the data selection scheme discussed in \ref{criteria}.}
 \label{fig:proposed}
\end{figure*}
When we train the IS function online, the efficiency of the training may be problematic while the model trained offline is not capable of generalizing to new environments. %since the regression problem needs to be solved many times through a control execution. 
%On the other hand, if we rely on offline training, the resulting controller cannot be generalized to different settings.
%One possible solution to this problem is to train the function $h$ completely offline and execute the control with the obtained prediction. However, this method cannot be generalized to a newly given environment. %This dilemma is also mentioned in some papers in the literature (see e.g., \cite{lidar1}). 
%Thus, in this paper, we consider a way to mitigate online computation by effectively exploiting the data collected offline or in past control execution instead of completely relying on online or offline computation.
Thus, instead of relying on either online or offline training, we consider employing a meta-learning \cite{meta1} scheme, which enables quick prediction on $h_{\mathrm{IS}}$ with a few amount of online data by effectively using data from different settings. 
%for constructing the function $h_{\mathrm{IS}}$. 
%Meta-learning is a sub-field of machine learning that aims to train a model that can rapidly adapt to new tasks (in our case, new environments) based on past experiences \cite{meta1,meta2}.
In the following, we first introduce a Bayesian meta-learning framework ALPaCA (Adaptive Learning for Probabilistic Connectionist Architectures) \cite{ALPaCA}. Then, the probabilistic bounds on the prediction are subsequently discussed. %and a method to mitigate the conservativeness on the bounds are subsequently discussed.%In the following, we discuss concrete procedures to train a model that can predict the GP distribution on the function $h(\xi, \cdot)$ based on a few amount of observation data with ALPaCA. 
\subsubsection{Overview of ALPaCA}
We first parameterize the prediction of the function $h_{\mathrm{IS}}$ by the following form:
\begin{align}\label{model}
    \hat{h}_{\mathrm{IS}}(z;\theta) = \theta^\top \phi_w(z),
\end{align}
where $\phi_w: \mathbb{R}^2 \rightarrow  \mathbb{R}^d$ represents a feed-forward neural network with parameter $w$ and $\theta \in \mathbb{R}^d$ is a coefficient matrix that follows a Gaussian distribution $\mathcal{N}(\bar{\theta},\sigma^2\Lambda^{-1})$ which encodes the information and uncertainty associated with the unknown variable $\xi\sim p(\xi)$. Here, $\bar{\theta}\in \mathbb{R}^d$ denotes the mean parameters and $\Lambda \in \mathbb{R}^{d\times d}$ is the positive definite precision matrix. Once the parameters of $\phi_w$ and the prior distribution, ($w,\bar{\theta}_0,\Lambda_0$), are determined through the offline meta-training procedure explained later, and the new measurement data along with a fixed $\xi\sim p(\xi)$ is given as $D_\xi = \{ (z_i, y_i)\}_{i=1}^{n_{\mathrm{adapt}}}$, the posterior distribution on $\theta$ is obtained as follows.
\begin{align}\label{post}
    \Lambda_\xi = \Phi_\xi^\top \Phi_\xi + \Lambda_0,\ \ 
    \bar{\theta}_\xi = \Lambda_\xi^{-1}(\Phi_\xi^\top G_\xi+ \Lambda_0 \theta_0),
\end{align}
where $G_\xi^\top = [y_1,y_2,\ldots, y_{n_{\mathrm{adapt}}}]\in \mathbb{R}^{n_{\mathrm{adapt}}}$, and $\Phi_\xi^\top=[\phi_w(z_0),\phi_w(z_1),\ldots, \phi_w(z_{n_{\mathrm{adapt}}})]\in \mathbb{R}^{d\times n_{\mathrm{adapt}}}$. 
The above computation is based on Bayesian linear regression. Here, the mean and precision matrices are subscripted by $\xi$ to explicitly show that they are calculated using the data associated with unknown variable $\xi\sim p(\xi)$.
Then, for any given $z \in \mathbb{R}^2$, the mean and variance of the posterior predictive distribution are obtained as
\begin{align}\label{post2}
    \mu_\xi(z) = \bar{\theta}_\xi^\top \phi_w(z),\ \Sigma_\xi(z) = \sigma^2(1+\phi_w(z)^\top\Lambda_\xi^{-1}\phi_w(z)).
\end{align}

\subsubsection{Offline meta-training procedures}\label{offline}
In the offline meta-training procedure, the parameters ($w$, $\bar{\theta}_0$, $\Lambda_0$) are trained so that the previously discussed online adaptation yields accurate prediction (The pseudo-code of the offline training is summarized in Algorithm \ref{alg2} in Appendix \ref{pseudo}).
To this end, we iterate the following procedures prescribed $N_{\mathrm{ite}}\in \mathbb{Z}_{>0}$ times.
First, we prepare $J\in \mathbb{Z}_{>0}$ data sets by sampling obstacles $\xi_1,\xi_2,\ldots,\xi_J$ from the distribution $p(\xi)$ and collect the data for each obstacle as $D_{\xi_j}=\cup_{i=1}^{n_j}\{(z_i^j, y_i^j)\}$, where $n_j$ is the number of data within $D_{\xi_j}$. Then, we randomly split each dataset $D_{\xi_j}$ into training set $D_{\xi_j}^{\mathrm{tr}}=\cup_{i=1}^{n_{j}^{\mathrm{tr}}}\{(z_i^{\mathrm{tr},j}, y_i^{\mathrm{tr},j})\}$ and test set $D_{\xi_j}^{\mathrm{ts}}=\cup_{i=1}^{n_{j}^{\mathrm{ts}}}\{(z_i^{\mathrm{ts},j}, y_p^{\mathrm{ts},j})\}$, where $n_{j}^{\mathrm{tr}}$ and $n_{j}^{\mathrm{ts}}$ are number of data within $D_{\xi_j}^{\mathrm{tr}}$ and $D_{\xi_j}^{\mathrm{ts}}$ with $n_{j}^{\mathrm{tr}}+n_{j}^{\mathrm{ts}}=n_j$. In each iteration, $n_{j}^{\mathrm{tr}}$ is also randomly chosen from the uniform distribution over $\{1,\ldots,n_j \}$.
The data within $D_{\xi_j}^{\mathrm{tr}}$ is used to calculate the posterior distribution on the parameter $\theta_{\xi_j}$ (i.e., mean $\bar{\theta}_{\xi_j}$ and precision matrix $\Lambda_{\xi_j}$) through (\ref{post}) while data within $D_{\xi_j}^{\mathrm{ts}}$ is used to evaluate the accuracy of the model (\ref{model}) with the posterior parameters $\bar{\theta}_{\xi_j}$ and $\Lambda_{\xi_j}$ calculated above. We define the meta-learning objective by the following marginal log-likelihood across the data and update the parameters through stochastic gradient descent.
%Then, we define the loss function so that the posterior distribution computed with $D_i^{\mathrm{tr}}$ can predict the input and output relations in $D_i^{\mathrm{ts}}$ accurately as follows:
\begin{align}\label{loss}
    \ell(&\bar{\theta}_0,\Lambda_0,w) := \sum_{j=1}^J \sum_{i=1}^{n_j^{\mathrm{ts}}}\log p(y_i^{\mathrm{ts},j}\mid z_i^{\mathrm{ts},j})\notag \\
    & = \sum_{j=1}^J \sum_{i=1}^{n_j^{\mathrm{ts}}}(\log(1+\phi_w(z_i^{\mathrm{ts},j})^\top \Lambda_{\xi_j} \phi_w(z_i^{\mathrm{ts},j}))\notag \\
   &\ \ \quad+ (y_{i}^{\mathrm{ts},j}-\bar{\theta}_{\xi_j}^\top\phi_w(z_i^\mathrm{ts,j}))^\top \Sigma_{\xi_j}(z_i^\mathrm{ts,j}) \notag \\&\qquad \quad (y_{i}^{\mathrm{ts},j}-\bar{\theta}_{\xi_j}^\top\phi_w(z_i^\mathrm{ts,j}))).
\end{align}

\subsubsection{Probabilistic bounds on CBF}\label{sec bound}
From the discussions above, we can compute the predictive distribution of the function value $h_{\mathrm{IS}}(z;\xi)$ for any $z\in \mathbb{R}^2$ and $\xi\sim p(\xi)$ by (\ref{post2}) once the offline meta-training procedure discussed in Section \ref{offline} has been done and new measurements are obtained along with the online execution. Here, we consider deriving a deterministic function with probabilistic guarantees, which can readily be incorporated in the QP formulation (\ref{qp}). The following discussion follows from \cite{ALPaCA2}. Before deriving the probabilistic bounds on the prediction of $h_{\mathrm{IS}}$, we make the following two assumptions on the quality of the offline meta-training, which are realistic and intuitive ones. 
\begin{assumption}\label{assumption1}
For all $\xi\sim p(\xi)$, there exists $\theta^*_\xi\in \mathbb{R}^d$ such that
\begin{align}
    {\theta^*_\xi}^\top \phi_w(z) = h_{\mathrm{IS}}(z;\xi),\ \ \forall z\in \mathbb{R}^2.
\end{align}
\end{assumption}
\begin{assumption}\label{assumption2}
    For all $\xi \sim p(\xi)$, we have the following:
    \begin{align}
        \mathbb{P}(\| \theta^*_\xi-\bar{\theta}_0 \|_{\Lambda_i}^2\leq \sigma^2\mathcal{X}_d^2(1-\delta))\geq 1-\delta.
    \end{align}
\end{assumption}
Assumption \ref{assumption1} implies that the meta-learning model (\ref{model}) is capable of fitting the true function while Assumption \ref{assumption2} says that the uncertainty in the prior is conservative enough. 
Under these assumptions, we can derive the bounds on the function $h_{\mathrm{IS}}$ as follows.
\begin{theo}\label{theo:bound}
    Suppose the offline meta-training of the parameters $w$, $\bar{\theta}_0$, and $\Lambda_0$ is done to satisfy Assumption \ref{assumption1} and \ref{assumption2}, and the parameters defining posterior distribution $\bar{\theta}_\xi$ and $\Lambda_\xi$ are computed by (\ref{post}) with data from the new environment $\xi\sim p(\xi)$. Then, the L-1 norm of the difference between the true function value $h_{\mathrm{IS}}(z;\xi)$ and the mean prediction $\bar{h}_{\mathrm{IS}}(z;\theta_\xi)=\bar{\theta}_\xi\phi(z)$ with $\theta_\xi\sim \mathcal{N}(\bar{\theta}_\xi,\Lambda_\xi)$, for all $z\in \mathbb{R}^2$ and $\xi \sim p(\xi)$ is bounded as the following. 
    \begin{align}\label{bound}
      |h_{\mathrm{IS}}(z;\xi)-\bar{h}_{\mathrm{IS}}(z;\theta_\xi)|\leq  |\phi(z)^\top (\bar{\theta}_\xi-\theta^*_\xi)|\leq \|\phi(z) \|_{\Lambda_\xi^{-1}} \beta_\xi,
    \end{align}
    with
    \begin{align}
        \beta_\xi = \sigma\left( \sqrt{2\log \left( \frac{1}{\delta}\frac{\mathrm{det}(\Lambda_\xi)^{1/2}}{\mathrm{det}(\Lambda_0)^{1/2}}\right)} + \sqrt{\frac{\bar{\lambda}(\Lambda_0)}{\underline{\lambda}(\Lambda_\xi)}\mathcal{X}_d^2(1-\delta)}\right),\notag
    \end{align}
    with probability at least $(1-2\delta)$.
\end{theo}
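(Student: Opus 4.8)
The plan is to recognize this as a Bayesian analogue of the standard confidence-ellipsoid (OFUL-type) argument for linear regression: I would reduce the pointwise bound \req{bound} to a $\Lambda_\xi$-weighted bound on the parameter error $\bar{\theta}_\xi-\theta^*_\xi$, split that error into a noise-driven part and a prior-driven part, and control each separately. The first inequality in \req{bound} is essentially an equality and follows immediately from realizability: under Assumption \ref{assumption1} the true surface is $h_{\mathrm{IS}}(z;\xi)=\phi_w(z)^\top\theta^*_\xi$, while the mean prediction is $\bar{h}_{\mathrm{IS}}(z;\theta_\xi)=\phi_w(z)^\top\bar{\theta}_\xi$, so the left-hand side equals $|\phi_w(z)^\top(\bar{\theta}_\xi-\theta^*_\xi)|$ exactly. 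The real work is the second inequality.

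First I would insert the data model $G_\xi=\Phi_\xi\theta^*_\xi+E_\xi$, where $E_\xi=[\varsigma_1,\ldots,\varsigma_{n_{\mathrm{adapt}}}]^\top$ collects the $\sigma$-subgaussian measurement noise, into the posterior mean \req{post} and use the defining identity $\Phi_\xi^\top\Phi_\xi=\Lambda_\xi-\Lambda_0$. A short computation yields the decomposition
\[
\bar{\theta}_\xi-\theta^*_\xi=\Lambda_\xi^{-1}\Phi_\xi^\top E_\xi+\Lambda_\xi^{-1}\Lambda_0(\bar{\theta}_0-\theta^*_\xi).
\]
Applying Cauchy--Schwarz in the $\Lambda_\xi$-weighted inner product gives $|\phi_w(z)^\top(\bar{\theta}_\xi-\theta^*_\xi)|\leq\|\phi_w(z)\|_{\Lambda_\xi^{-1}}\,\|\bar{\theta}_\xi-\theta^*_\xi\|_{\Lambda_\xi}$, so it suffices to show $\|\bar{\theta}_\xi-\theta^*_\xi\|_{\Lambda_\xi}\leq\beta_\xi$, which by the triangle inequality amounts to bounding the two summands above in the $\Lambda_\xi$-norm.

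For the prior-driven term I would write $\|\Lambda_\xi^{-1}\Lambda_0(\bar{\theta}_0-\theta^*_\xi)\|_{\Lambda_\xi}^2=(\bar{\theta}_0-\theta^*_\xi)^\top\Lambda_0\Lambda_\xi^{-1}\Lambda_0(\bar{\theta}_0-\theta^*_\xi)$ and then use $\Lambda_\xi^{-1}\preceq\underline{\lambda}(\Lambda_\xi)^{-1}I$ followed by $\Lambda_0^2\preceq\bar{\lambda}(\Lambda_0)\Lambda_0$ to reduce this to $\tfrac{\bar{\lambda}(\Lambda_0)}{\underline{\lambda}(\Lambda_\xi)}\|\theta^*_\xi-\bar{\theta}_0\|_{\Lambda_0}^2$; Assumption \ref{assumption2} then bounds the norm by $\sigma^2\mathcal{X}_d^2(1-\delta)$ with probability $1-\delta$, producing exactly the second summand of $\beta_\xi$. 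The hard part will be the noise-driven term $\|\Lambda_\xi^{-1}\Phi_\xi^\top E_\xi\|_{\Lambda_\xi}=\|\Phi_\xi^\top E_\xi\|_{\Lambda_\xi^{-1}}$, which is a self-normalized sum of the subgaussian noise and is the only genuinely probabilistic (non-algebraic) ingredient. Here I would invoke the self-normalized tail bound for vector-valued martingales \cite{ALPaCA2} to obtain $\|\Phi_\xi^\top E_\xi\|_{\Lambda_\xi^{-1}}\leq\sigma\sqrt{2\log\!\big(\delta^{-1}\det(\Lambda_\xi)^{1/2}\det(\Lambda_0)^{-1/2}\big)}$ with probability $1-\delta$, which is precisely the first summand of $\beta_\xi$. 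A union bound over these two failure events, each of probability $\delta$, then delivers the claim with probability at least $1-2\delta$.
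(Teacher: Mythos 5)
Your proposal is correct and follows essentially the same route as the paper, which simply cites the intermediate result $|a^\top(\bar{\theta}_\xi-\theta^*_\xi)|\leq \|a\|_{\Lambda_\xi^{-1}}\beta_\xi$ from the proof of Theorem 1 in \cite{ALPaCA2} (itself built on the self-normalized bound of \cite{bandit}) and substitutes $a=\phi_w(z)$. You have in effect reconstructed the omitted details of that cited argument --- the error decomposition, the weighted Cauchy--Schwarz step, the treatment of the prior term via Assumption \ref{assumption2}, the self-normalized martingale bound for the noise term, and the union bound --- all of which check out.
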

\begin{proof}
Proof of this result follows from \cite{ALPaCA2} and \cite{bandit}. In the middle of the proof of Theorem 1 in \cite{ALPaCA2}, it is shown that $|a^\top (\bar{\theta}_\xi-\theta^*_\xi)|\leq \|a \|_{\Lambda_\xi^{-1}} \beta_\xi$ holds for any $a \in \mathbb{R}^d$ with probability at least $(1-2\delta)$ (see pp. 18 of \cite{ALPaCA2}). 
%\begin{align}\label{temp}
%    |a^\top (\bar{\theta}_\xi-\theta^*_\xi)|\leq \|a \|_{\Lambda_\xi^{-1}} \beta_\xi.
%\end{align}
Then, (\ref{bound}) is obtained by substituting $\phi_w(z)$ to $a$ in the inequality.
\end{proof}
Since the difference between the true value and mean prediction is bounded by $\|a \|_{\Lambda_\xi^{-1}} \beta_\xi$, we can provide the lower bound of $h_{\mathrm{IS}}$ corresponding to a robot state $x$ as the following.
\begin{align}\label{cbf pred}
    h^b(x;\theta_\xi)
    := \bar{h}_{\mathrm{IS}}(v(x);\theta_\xi)-\|\phi(v(x)) \|_{\Lambda_\xi^{-1}} \beta_\xi,
\end{align}
where $v: \mathbb{R}^n\rightarrow \mathbb{R}^2$ is the function that maps the robot's state to the robot's position in 2D space. By using $h^b$ as the CBF, we can provide a probabilistic guarantee of the resulting controller as elaborated in Section \ref{sec:control}.
\begin{comment}
\subsubsection{Alleviating conservativeness on probabilistic bounds}
When the values of the confidence bounds explained in Section \ref{sec bound} are large, the resulting control performance will be conservative because of large predictive unsafe regions. 
Thus, we want to avoid unnecessary large $\beta_\xi$ without compromising on safety guarantees. To this end, we add the following regularization term in the original loss function (\ref{loss}) as proposed in \cite{ALPaCA2}. 
\begin{align}\label{reg}
\mathcal{L}_{\mathrm{reg}}^\xi(\Lambda_0) = \gamma \mathrm{Tr}(\Lambda_\xi^{-T} \Lambda_\xi^{-1})\mathrm{Tr}(\Lambda_0^{-T} \Lambda_0^{-1}),
\end{align}
where $\gamma\in \mathbb{R}_{\geq 0}$ represents the weights for the regularization term and $\mathrm{Tr}(\cdot)$ is the determinant of the argument matrix. This term corresponds to the upper bound on the term $\frac{\bar{\lambda}(\Lambda_0)}{\underline{\lambda}(\Lambda_\xi)}$ in $\beta_\xi$. Thus, minimizing (\ref{reg}) leads to small $\beta_\xi$ value.
In the simulation discussed in Section \ref{case study}, we will see that this regularization technique largely reduces the conservativeness in bounds and makes the proposed control scheme practical.
\end{comment}
\section{Online Control Execution}\label{sec:control}

\begin{algorithm}[t]\label{alg}
{\small
\SetKwInOut{Input}{Input}
\SetKwInOut{Output}{Output}
\Input{$f$, $g$ (system dynamics model); $p(\xi)$ (distribution of obstacles); $(w,\bar{\theta}_0,\Lambda_0)$ (prior parameters); $\delta$ (confidence level); $V$ (CLF); $T$ (execution time)}
$\xi_{i}\sim p(\xi),\ \mathcal{B}_{i}\leftarrow{\emptyset},\ \forall i\in \mathbb{N}_{1:N_{\mathrm{obs}}}$; \\

\While{$t<T$}{
\If{$t=k\Delta_{\mathrm{lidar}}$, $k\in \mathbb{Z}_{\geq 0}$}{
\textbf{[CBF update]}\\
\For{$i=1:N_{\mathrm{obs}}$}{
Obtain sensor measurements $\mathcal{P}^0_i$;\\
Construct dataset $D_i$ by procedures in Sec \ref{dataset}\\
Update the buffer $\mathcal{B}_i$ by procedures in Sec \ref{criteria};\\

\If{$\mathcal{B}_i\neq \emptyset$}{
Compute the posterior parameters $\bar{\theta}_{\xi_i}$ and $\Lambda_{\xi_i}$ by (\ref{post}); \\
Set $h^b(\cdot;\theta_{\xi_i}) = {h}^{b}_{IS}(v(\cdot);\theta_{\xi_i})$ by (\ref{cbf pred});\\}}}
\textbf{[Control execution]}\\
Solve the QP (\ref{qp}) with the CBFs obtained above;\\
Solve (\ref{system}) and update state $x(t)$;}

    \caption{Online control execution} 
    }
\end{algorithm}
%In this section, we explain the detailed procedures of the online control execution.
Given system (\ref{system}), meta-learned parameters $(w,\bar{\theta}_0,\Lambda_0)$, and CLF $V$ that encodes a control objective, the procedures in the online control execution is as summarized in Algorithm \ref{alg}. 
Before the execution, an environment is determined by sampling the obstacles $\xi_i,\ \forall i\in \mathbb{N}_{1:N_\mathrm{obs}}$ from the distribution $p(\xi)$ and the data buffers $\mathcal{B}_i$ for $\xi_i,\  \forall i\in \mathbb{N}_{1:N_\mathrm{obs}}$ are initialized by empty sets (Line 1). Then, the control is executed by repeatedly solving the QP (\ref{qp}) with the CBFs (\ref{cbf pred}) defined using the current posteriors $\bar{\theta}_{\xi_i}$ and $\Lambda_{\xi_i}$ calculated by the data in the current buffers (Line 13,14). The buffers $\mathcal{B}_i$ and the posterior parameters $\bar{\theta}_{\xi_i}$ and $\Lambda_{\xi_i}$ $\forall i\in \mathbb{N}_{1:N_{\mathrm{obs}}}$ are updated at every $\Delta_{\mathrm{lidar}}\in \mathbb{R}_{>0}$ [s] through the procedures in the following subsection (Line 3-11). Since each buffer $\mathcal{B}_i$ is empty at the beginning of the execution, we ignore the CBF constraint for $\xi_i$ until surface points of the corresponding obstacle are detected. To justify this, we make the following assumption.
\begin{assumption}\label{assumption3}
The length of the laser is long enough and the measurements are taken constant enough such that the control invariant of the safe region is maintained even if the unobserved obstacles are ignored in the current QP.
\end{assumption}
%When the surface points of an obstacle $\xi_i$ are detected, the CBF constraint for $\xi_i$ is added to the QP formulation.

In the following subsections, we discuss the update scheme of each buffer $\mathcal{B}_i$ and the theoretical results of the proposed control scheme.
\subsection{The update scheme of each buffer $\mathcal{B}_i$}\label{criteria}
The update procedures of each buffer $\mathcal{B}_i$ are discussed here. We ignore the subscription $i$ in the following discussion for simplicity of notation.
At time instance $k$, suppose we have new data points $D_k=\cup_{i=1}^{n_k}\cup_{p=n_-}^{n_+}\{(z_k^{+p\Delta,i},p\Delta)\}$ obtained by following the procedures in Section \ref{dataset} and the buffer $\mathcal{B}$ which is constructed by the data points obtained at past time instances 0 to $k-1$.
Intuitively, the data points in $D_k$ will include many data points that are similar to those obtained in the past instances if the time interval of updating the dataset is small. Thus, it is important to select only the informative data points before adding them to the buffer. Here, we consider the data selection scheme for this purpose based on \cite{select}. We repeat the following procedures for $i\in \mathbb{N}_{1:n_k}$. First, we compute the predictive variance at a data point on the surface $z_k^{0,i}$ with posterior parameters calculated by the current $\mathcal{B}$ as $\Sigma_{\xi}(z_k^{0,i})$.
Then, if the value $\Sigma_{\xi}(z_k^{0,i})$ is larger than the prescribed threshold $\eta\in \mathbb{R}_{\geq 0}$, the data points $\cup_{p=n_-}^{n_+}(z_k^{p\Delta,i},p\Delta)$ are considered to be informative for the current model and thus added to the buffer $\mathcal{B}$. %Lastly, the posterior parameters are updated using the data points in $\mathcal{B}$.
\subsection{Theoretical Results}
%The set of data points added to the buffer $\mathcal{B}$ (i.e., $D_{k}^{\mathrm{add}}$ in Section \ref{dataset}) is constructed by the data points with variances larger than prescribed threshold $\eta\in \mathbb{R}_{\geq 0}$.

The probabilistic safety guarantees for our control scheme are summarized in the following theorem. %\footnote{Same as some previous works such as \cite{lidar1,lidar2}, feasibility of the problem (\ref{qp}) cannot be guaranteed since the derivative constraint (\ref{qp cbf}) is not considered in the training procedures.}
\begin{theo}
   Suppose Assumption \ref{assumption3} holds and the offline meta-training of the parameters $w$, $\bar{\theta}_0$, and $\Lambda_0$ is performed to satisfy Assumption \ref{assumption1} and \ref{assumption2}, and feasible solution of the QP in Algorithm \ref{alg} is obtained at all time $t<T$. Then, $x(t)\in \mathcal{C}_{\xi_i}$ holds with probability at least $1-2\delta$ for all $t\in \mathbb{R}_{>0}$ and $i\in \mathbb{N}_{1:N_{\mathrm{obs}}}$.
    %Then, the vehicle is controlled without stepping into each unsafe region determined by $\xi_i$ $(i\in \mathbb{N}_{1:N_{\mathrm{obs}}})$ with probability at least $1-2\delta$. %Moreover, if there always exists a control input that can make the vehicle stay at the same state  (e.g., vehicle system with velocity control authority), the optimization problem (\ref{qp}) is always feasible given the initial state is within the safe set $\mathcal{C}$.
\end{theo}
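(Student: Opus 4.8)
The plan is to reduce the statement to a deterministic forward-invariance argument carried out on the high-probability event supplied by Theorem \ref{theo:bound}. Fix an obstacle index $i\in\mathbb{N}_{1:N_{\mathrm{obs}}}$ and note that, by \eqref{cbf} and \eqref{IS}, the safe set $\mathcal{C}_{\xi_i}$ is exactly the zero super-level set of $h_{\mathrm{IS}}(v(\cdot);\xi_i)$, so $x\in\mathcal{C}_{\xi_i}$ is equivalent to $h_{\mathrm{IS}}(v(x);\xi_i)\geq 0$. Theorem \ref{theo:bound} gives, with probability at least $1-2\delta$, the \emph{uniform} bound $|h_{\mathrm{IS}}(v(x);\xi_i)-\bar{h}_{\mathrm{IS}}(v(x);\theta_{\xi_i})|\leq\|\phi(v(x))\|_{\Lambda_{\xi_i}^{-1}}\beta_{\xi_i}$ for every $x$, which by the definition of $h^b$ in (\ref{cbf pred}) is precisely $h^b(x;\theta_{\xi_i})\leq h_{\mathrm{IS}}(v(x);\xi_i)$. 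Hence, on this event, the learned safe set $\mathcal{C}^b_{\xi_i}:=\{x:h^b(x;\theta_{\xi_i})\geq 0\}$ satisfies $\mathcal{C}^b_{\xi_i}\subseteq\mathcal{C}_{\xi_i}$, so it suffices to keep the trajectory inside $\mathcal{C}^b_{\xi_i}$.

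Working on this event, I would then invoke the deterministic CBF machinery. The feasibility hypothesis guarantees that the QP \eqref{qp} returns, at every $t<T$, an input satisfying (\ref{qp cbf}), i.e. $u(t)\in U_C(x(t))$ with $h^b(\cdot;\theta_{\xi_i})$ playing the role of the CBF. Applying Theorem \ref{theo:cbf} with $h=h^b(\cdot;\theta_{\xi_i})$ renders $\mathcal{C}^b_{\xi_i}$ forward invariant, and combined with $\mathcal{C}^b_{\xi_i}\subseteq\mathcal{C}_{\xi_i}$ this yields $x(t)\in\mathcal{C}_{\xi_i}$ for all $t$. The early phase in which obstacle $i$ is undetected (empty buffer $\mathcal{B}_i$, CBF constraint omitted from the QP) is handled by Assumption \ref{assumption3}, which ensures the laser detects the obstacle early enough that the state still lies in $\mathcal{C}^b_{\xi_i}$ when the corresponding constraint first becomes active, supplying the initial condition required for forward invariance. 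Since the only randomness used is the single event of Theorem \ref{theo:bound}, the probability $1-2\delta$ is inherited directly for each fixed $i$ (a simultaneous guarantee over all obstacles would instead cost a union bound).

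The step I expect to be the main obstacle is that $h^b(\cdot;\theta_{\xi_i})$ is \emph{time-varying}: the buffer $\mathcal{B}_i$, and with it $\bar{\theta}_{\xi_i}$, $\Lambda_{\xi_i}$, and $\beta_{\xi_i}$, are refreshed every $\Delta_{\mathrm{lidar}}$ seconds, so the CBF and its invariant set $\mathcal{C}^b_{\xi_i}$ jump discontinuously, whereas Theorem \ref{theo:cbf} is stated for a \emph{fixed} CBF. Making this rigorous requires two things: (i) a concentration bound valid simultaneously for every buffer configuration met along the trajectory, not just for one posterior, and (ii) an argument that the state stays in the \emph{updated} set immediately after each refresh. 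For (i) I would rely on the fact that the self-normalized bound underlying Theorem \ref{theo:bound} (taken from \cite{bandit}) is anytime-valid, so the single event of probability $1-2\delta$ makes $h^b\leq h_{\mathrm{IS}}(v(\cdot);\xi_i)$ hold for all posteriors at once, keeping the probability at $1-2\delta$ without a time-union bound. Point (ii) is the genuinely delicate part: I would either show that successive bounds tighten so the super-level sets are nested, or restrict to the piecewise-constant intervals between refreshes and apply the comparison-lemma form of Theorem \ref{theo:cbf} on each interval, verifying that the new $h^b$ is nonnegative at the switching instant so the invariance argument can be restarted.
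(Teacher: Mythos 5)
Your proposal is correct and follows essentially the same route as the paper: forward invariance of the learned super-level set $\hat{\mathcal{C}}_{\xi_i}:=\{x\mid h^b(x;\theta_{\xi_i})\geq 0\}$ via Theorem \ref{theo:cbf}, containment $\hat{\mathcal{C}}_{\xi_i}\subseteq\mathcal{C}_{\xi_i}$ on the high-probability event of Theorem \ref{theo:bound}, and Assumption \ref{assumption3} to cover the phase before an obstacle is detected. The subtleties you flag about the time-varying CBF (uniform validity of the concentration bound across posterior updates and re-establishing the initial condition in the refreshed set at each switching instant) are genuine, but the paper's own proof does not address them either and simply applies Theorem \ref{theo:cbf} as if $h^b$ were fixed, so your treatment is if anything more careful than the original.
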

\begin{proof}
    From Theorem \ref{theo:cbf}, the implicit control policy defined by the QP in Algorithm \ref{alg} keeps the system states within $\hat{\mathcal{C}}_{\xi_i}:=\{ x\in \mathcal{D}\subset \mathbb{R}^n\mid h^b(x,\xi_i)\geq 0 \}$ for all of the observed obstacles $\xi_i$. Since each state in $\hat{\mathcal{C}}_{\xi_i}$ is included in the actual safe set $\mathcal{C}_{\xi_i}$ with probability at least $1-2\delta$ from Theorem \ref{theo:bound}, we can guarantee $x(t)\in \mathcal{C}_{\xi_i}$ with probability at least $1-2\delta$ for all $t\in \mathbb{R}_{>0}$ and the observed obstacles $\xi_i$. This result and Assumption \ref{assumption3} prove the theorem.
    %Moreover, from Assumption \ref{assumption3}, the implicit control policy does not force the system to step into the unsafe region defined by unobserved obstacles. 
\end{proof}
%Note that the feasibility of the QP in Algorithm \ref{alg} is not ensured similar to the previous works \cite{lidar1,lidar2} because the learned CBF is not \textit{a priori} guaranteed to satisfy the derivative condition (\ref{qp cbf}). A way to train the CBF guaranteed to meet the derivative condition (\ref{qp cbf}) in an online fashion will be addressed in future work.

%Although the proposed method cannot guarantee the feasibility of , .
%The CBFs (\ref{}) are updated by the data points obtained
\section{Case Study}\label{case study}
We test the proposed approach through a simulation of vehicle navigation in a 2D space. 
All the experiments are conducted on Python running on a Windows~10 with a 2.80 GHz Core i7 CPU and 32 GB of RAM. TensorFlow and PyBullet \cite{pybullet} are used for the meta-learning of the IS function and implementation of the vehicle system with LiDAR scanners, respectively.
We define the controlled system by the unmanned vehicle that has the following dynamics:
\begin{align}\label{car}
    \underbrace{\begin{bmatrix}
        \dot{q_x} \\
        \dot{q_y} \\
        \dot{\vartheta}
    \end{bmatrix}}_{\dot{x}}
    =
    \underbrace{\begin{bmatrix}
        0\\
        0\\
        0
    \end{bmatrix}}_{f(x)}
    +
    \underbrace{\begin{bmatrix}
        \cos \vartheta & -\ell \sin \vartheta \\
        \sin \vartheta & \ell \cos \vartheta\\
        0 & 1
    \end{bmatrix}}_{g(x)}
    \underbrace{\begin{bmatrix}
        v\\
        \omega
    \end{bmatrix}}_{u},
\end{align}
where $q_x$, $q_y$, and $\vartheta$ are $x-y$ coordinates of the robot position and heading angle of the robot, $v$ and $\omega$ are the velocity and angular velocity of the vehicle, respectively. The control inputs and system states are defined as $u = [v,\omega]^\top\in \mathbb{R}^2$ and $x=[q_x,q_y,\vartheta] \in \mathbb{R}^3$, respectively. 
Note that in (\ref{car}), we consider the dynamics of a point off the axis of the robot by a distance $\ell\ (>0)$ to make the system relative degree one \cite{car dynamics}.
Moreover, the LiDAR scanner which has a 360-degree field of view and radiates 150 rays with a 3-meter length is assumed to be mounted on the vehicle. 
\subsection{Meta learning of unsafe regions}
\begin{figure}[tb]
 \begin{center}
  \includegraphics[width=1\hsize]{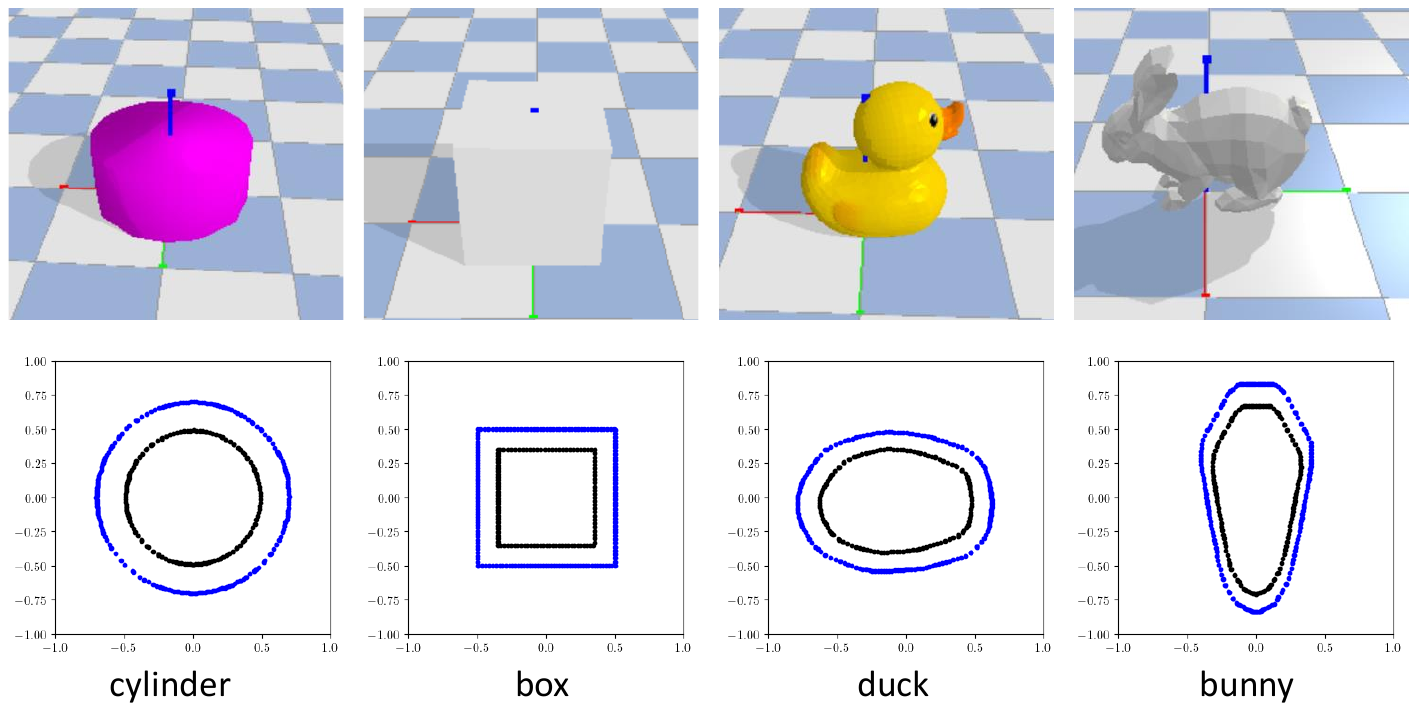}
 \end{center}
 \caption{PyBullet objects in case (II). The figures in the first and second lows are the GUI of the objects and corresponding cross sections at x-y coordinates, respectively. The surfaces with the smallest and largest sizes are shown by the black and blue lines, respectively.}
 \label{fig:shapes}
\end{figure}
\begin{figure*}[tb]
 \begin{center}
  \includegraphics[width=1\hsize]{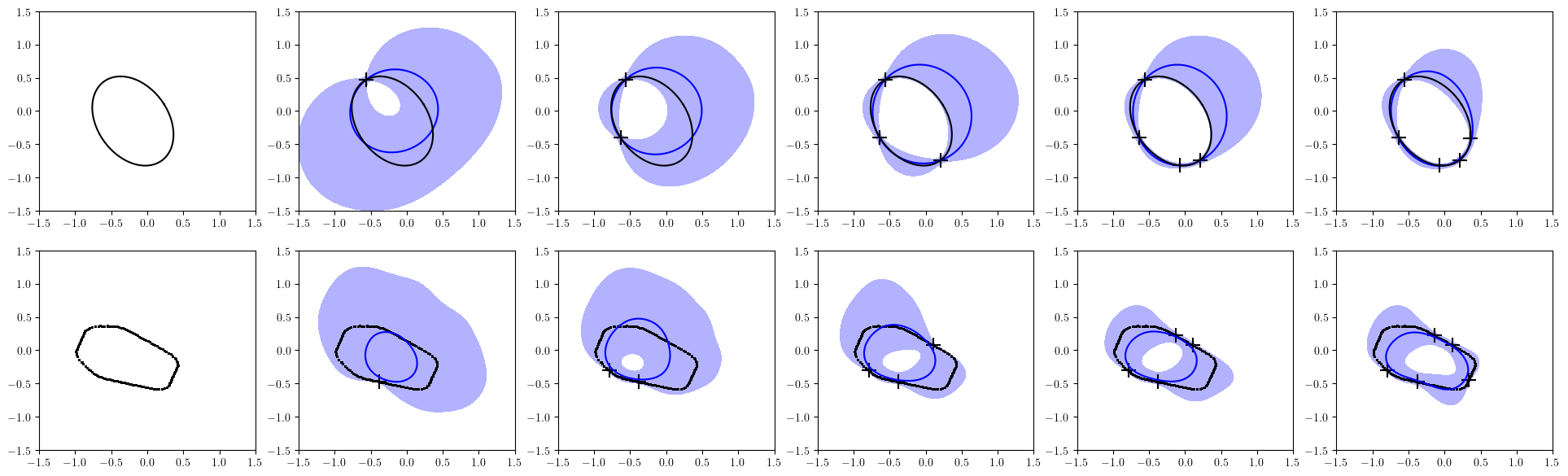}
 \end{center}
 \caption{A few examples of the prediction on unsafe regions: The black and blue lines show the actual surface and the mean prediction of the surface, respectively. %\textcolor{red}{The blue shaded areas show the sets of positions whose 90\% confidence interval of $h_{\mathrm{IS}}$ derived in Section \ref{sec bound} includes 0 (the value at the surface)}
 The blue shaded areas show the regions between the 0-level sets of the upper and lower 90\% bounds of $h_{\mathrm{IS}}$ derived by following the discussion in Section \ref{sec bound}, and the black cross marks show the data points on the surface used for the adaptation.}
 \label{fig:result1}
\end{figure*}
We first test the performance of the proposed scheme for constructing the IS function $h_{\mathrm{IS}}:\mathbb{R}^2\rightarrow \mathbb{R}$.
We consider 2 settings: (I) ellipsoidal obstacles with randomly chosen semi-axes lengths, center positions, and rotation angles (II) Objects in Fig. \ref{fig:shapes} with randomly chosen positions, rotation angles, and scales. %In setting (b), we used object files in pybullet\_data \cite{pybulletData} to test the method with more arbitrary shapes.
In setting (I), the IS function $h_{\mathrm{IS}}$ is defined by the ellipsoids as follows:
\begin{align}
&h_{\mathrm{IS}}(z;\xi)=\frac{[(q_x-q_{x,0}^\xi)\cos\vartheta_{\mathrm{ob}}^\xi +(q_y-q_{y,0}^\xi)\sin\vartheta_{\mathrm{ob}}^\xi]^2}{c^2_{x,\xi}}\notag \\
&\quad +\frac{[(q_x-q_{x,0}^\xi)\sin\vartheta_{\mathrm{ob}}^\xi-(q_y-q_{y,0}^\xi)\cos\vartheta_{\mathrm{ob}}^\xi]^2}{c^2_{y,\xi}}- 1,
\end{align}
where $c_x^\xi$, $c_y^\xi\ (>0)$ are the semi-axes lengths along the $x$ and $y$ coordinates respectively, $\vartheta_{\mathrm{obs}}^\xi$ is the rotation angle, $(q_{x,0}^\xi,q_{y,0}^\xi)$ is the center of the ellipsoid, and $z=[q_x,q_y]^\top$. In this example, the unobserved parameter $\xi$ can be explicitly written as $\xi=[c_x^\xi,c_y^\xi,q_{x,0}^\xi,q_{y,0}^\xi,\vartheta_{\mathrm{obs}}^\xi]^\top$.
The parameters ($c_x^\xi$, $c_y^\xi$), ($q_{x,0}^\xi$, $q_{y,0}^\xi$), and  $\vartheta_{\mathrm{ob}}^\xi$ are randomly sampled from the uniform distribution in the range $[0.4,0.8]$, $[-0.8,0.8]$, and $[0,2\pi]$, respectively. In setting (II), we used object files in pybullet\_data \cite{pybulletData} to test with more arbitrary shapes. The rotational angles and the center position of the objects are sampled in the same way as setting (I). %The details regarding PyBullet objects are provided in the appendix.
For both cases, the architecture of $\phi$ in (\ref{model}) is defined by three hidden layers of 256 units with 32 basis functions. Moreover, we set $N_{\mathrm{ite}}$, $\sigma$, $n_-$, $n_+$, and $\Delta$ to 30000, 0.001, 1, 5, and 0.1, respectively.

The prediction of an obstacle surface calculated with a small amount of data is shown in Figure \ref{fig:result1}.  %In the figure, we show the results for the model meta-trained with and without regularization term (\ref{reg}). 
The blue lines and blue shaded areas in the figure are the mean predictions of the surfaces and the regions between the 0-level sets of the upper and lower 90\% bounds of $h_{\mathrm{IS}}$. We can see that the confidence bounds are reasonably tight even when the number of data points for the adaptation is small, which is the main effect of the offline meta-training. Moreover, same as reported in \cite{ALPaCA}, the time required for the adaptation is much smaller than that of the GP \cite{GP1} which is a commonly used Bayesian method.  
More results including the comparison to the GP in terms of both the accuracy of the prediction and time required for the training are provided in Appendix \ref{appendix:prediction}.
%Notably, the unnecessary conservativeness of the blue-shaded areas is largely mitigated when we use the regularization term  (\ref{reg}) in the meta-training. 

\subsection{Control execution}
\begin{comment}
\begin{figure}[tb]
 \begin{center}
  %\includegraphics[width=1\hsize]{encoder-decoder.pdf}
  \includegraphics[width=1\hsize]{}
 \end{center}
 
 \caption{\textcolor{red}{(To be updated)} An example of the control execution. The black rectangles, gray circles, blue lines, red cross marks, and black dotted lines show the obstacle, controlled robot, 90\% probabilistic bounds on the unsafe region, surface observation used in adaptation, and trajectories of the center of the robot, respectively. The colors of the heat maps show the values of the function $h_{\mathrm{IS}}^b$ defined in Section \ref{sec bound}.}
 \label{fig:execution}
\end{figure}
\end{comment}
\begin{figure}[htbp]
  \centering
  \subfloat{\includegraphics[width=0.475\textwidth]{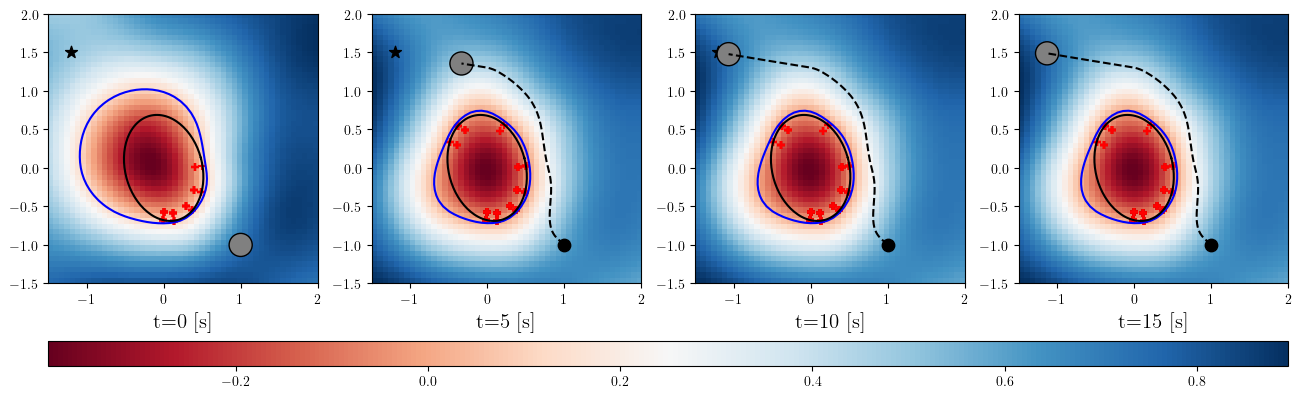}\label{fig:subfig1}}
  \hfill
  \subfloat{\includegraphics[width=0.475\textwidth]{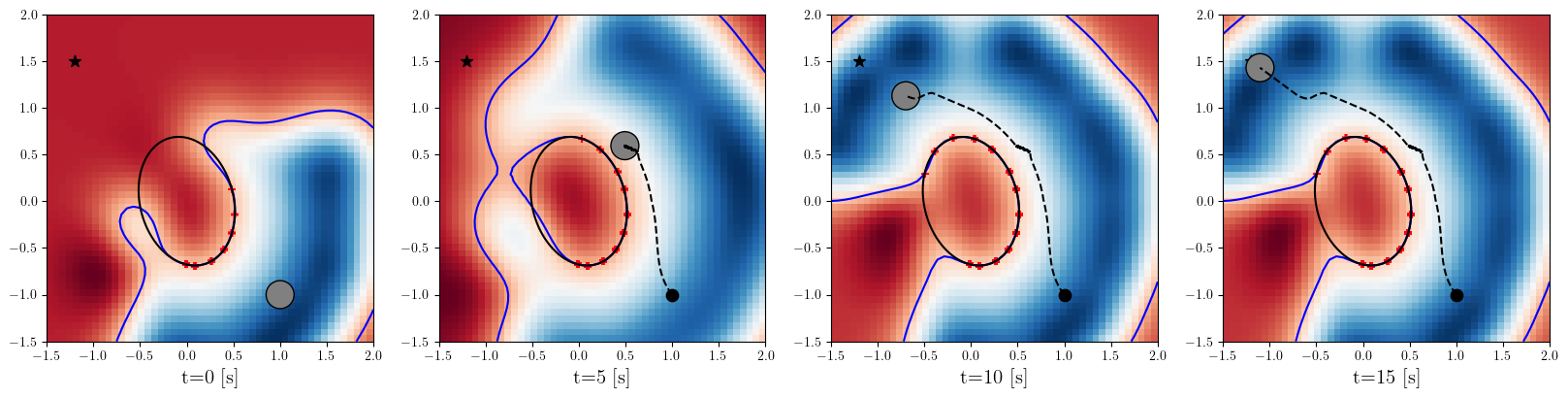}\label{fig:subfig2}}
  \caption{An example of control execution with the proposed method (above) and GP case (below). The red crosses, heat maps, blue solid lines, black solid lines, and black dotted lines represent data points on the surface, the values of CBF, 0-level sets of 95\% lower bounds of CBF, actual surfaces, and robot trajectories, respectively. The CBF is updated every 5 [s].}
  \label{fig:execution}
\end{figure}
\begin{comment}
\begin{figure}[thbp]
  \centering
  \begin{subfigure}{\textwidth}
    \includegraphics[width=\hsize]{DataForRALpropose.png}
   
  \end{subfigure}
  \hfill
  \begin{subfigure}{\textwidth}
    \includegraphics[width=\hsize]{DataForRALgp.png}
    
  \end{subfigure}
  \caption{A control execution for Environment 1 with the proposed method (above) and GP case (below). The red crosses, heat maps, blue solid lines, black solid lines, and black dotted lines represent data points on the surface, the values of CBF, 0-level sets of 95\% lower bounds of CBF, actual surfaces, and robot trajectories, respectively. The CBF is updated every 5 [s].}
 \label{fig:execution}
\end{figure}
\end{comment}
Based on the trained $h_{\mathrm{IS}}$, we next consider executing the control with Algorithm \ref{alg}. 
The control objective is to steer the vehicle toward the goal position $(q_x^g,q_y^g)$ while avoiding obstacles. The goal-reaching objective is encoded through the CLF $V(x)= (q_x-q_x^g)^2+(q_y-q_y^g)^2$.
Moreover, $\alpha_C$ and $\alpha_V$ in (\ref{qp}) are defined by the linear functions $\alpha_C(c)=\gamma_Cc$ and $\alpha_V(c)=\gamma_Vc$. We set parameters $\gamma_C$, $\gamma_V$, and $\lambda$ to $1.0$, $1.0$, and 10, respectively. %Moreover, $\eta$ in Section \ref{criteria} is set to 0.003.
%At each control execution, the obstacles in the environment are sampled from .
For comparison, we test the control performance of the proposed scheme and the case where the GP is used instead of ALPaCA. 
Since the theoretically guaranteed bounds of GP \cite{GP2,GP3} tend to be overly conservative, we instead use the 2-$\sigma$ lower bounds in the GP case.
The metric of the control performance is defined by the cumulative squared error (CSE) of the difference between the robot and goal positions, along with the resulting state trajectory (the squared errors are corrected every 0.02 [s]). %The experiment is conducted on 6 different environments with single or multiple obstacles (for detailed configurations, see Appendix).
An example of the control execution with $\Delta_{\mathrm{lidar}}=5$ [s] is visualized in Figure \ref{fig:execution}. %(The CSEs of the resulting robot trajectories for all of the environments are summarized for the cases $\Delta_{\mathrm{lidar}}=1,3,5$ in Table \ref{CSE} in Appendix \ref{appendix:control} and visualizations of the control execution are also provided in Figure \ref{fig:execution add1}--\ref{fig:execution add6} in Appendix \ref{appendix:control}). 
From Figure \ref{fig:execution}, we can see the vehicle successfully reaches the goal while avoiding the unsafe region. Notably, the control of the proposed method is much less conservative than the GP case because of the tight prediction of the unsafe region (the vehicle in the GP case is often forced to be stacked in small safe regions until the subsequent update of the CBF). The CSEs in this case for the proposed method and GP case are 618 and 1010, respectively. We additionally provide results for different 5 environments with single or multiple obstacles in Appendix \ref{appendix:control} (the results for the cases $\Delta_{\mathrm{lidar}}=1,3,5$ are shown) and similar results are observed. 
Although the CSE of the proposed method and the GP case get closer as $\Delta_{\mathrm{lidar}}$ becomes smaller, it is ideal to maintain the value $\Delta_{\mathrm{lidar}}$ to be small to save computation and the use of sensor devices that have limited batteries. Moreover, the GP prediction can not be updated so frequently because of the time required for the training (see the time analysis in Appendix \ref{appendix:prediction}).
From the results above, we can conclude that the proposed scheme has a practical advantage from the perspective of control. %More results are also provided in Appendix.

\section{Conclusion}
In this paper, we proposed an efficient and probabilistically ensured online learning method of CBF with sensor measurements. 
In the proposed method, we utilized a technique based on the GPIS to train unsafe regions from the measurements. Specifically, a Bayesian meta-learning scheme was employed to learn IS function, which enables us to effectively use past data from different settings. 
To avoid the volume of data buffers becoming unnecessarily large in the online control execution, we also considered the data selection scheme which effectively uses the uncertainty information provided by the current model. The prediction made by the proposed scheme was readily incorporated in the CBF-CLF-QP formulation and the probabilistic safety guarantee for the control scheme was also provided. In the case study, we have shown the efficacy of our method in terms of the conservativeness of the prediction, the time required for the online training, and the conservativeness of the control performance. %As future directions of this work, we will consider .
%\section*{Acknowledgement}
%This work is supported by JST CREST JPMJCR201, Japan and by JSPS KAKENHI Grant 21K14184.

%\newpage
\onecolumn
\appendix
\subsection{Pseudo code of offline meta-learning}\label{pseudo}
The Pseudo code of offline meta-learning is summarized in Algorithm \ref{alg2}.
\begin{algorithm}[t]\label{alg2}
{\small
\SetKwInOut{Input}{Input}
\SetKwInOut{Output}{Output}
\Input{$p(\xi)$ (distribution of obstacles);$\sigma$ (noise variance);}
Randomly initialize the parameters $\bar{\theta}_0,\Lambda_0,w$\\
%$\xi_{i}\sim p(\xi),\ \mathcal{B}_{i}\leftarrow{\emptyset},\ \forall i\in \mathbb{N}_{1:N_{\mathrm{obs}}}$;\\

\For{$i=1:N_{\mathrm{ite}}$}{
\For{$j=1:J$}{
Sample obstacles $\xi_j \sim p(\xi)$;\\
Construct datasets corresponding to the sampled obstacles as $D_{\xi_j}=\cup_{i=1}^{n_j}\{(z_i^j, y_i^j)\}$;\\
Sample $n_j^{\mathrm{tr}}$ from uniform distribution over $\{ 1,\ldots,n_j \}$\\
Split the dataset into training and test sets: $D_{\xi_j}^{\mathrm{tr}}=\cup_{i=1}^{n_{j}^{\mathrm{tr}}}\{(z_i^{\mathrm{tr},j}, y_i^{\mathrm{tr},j})\}$ and $D_{\xi_j}^{\mathrm{ts}}=\cup_{i=1}^{n_{j}^{\mathrm{ts}}}\{(z_i^{\mathrm{ts},j}, y_p^{\mathrm{ts},j})\}$\\
Compute posterior parameters $\bar{\theta}_{\xi_j}$ and $\Lambda_{\xi_j}$ via (\ref{post}) with data $D_{\xi_j}^{\mathrm{tr}}$\\
}
Update $\bar{\theta}_0,\Lambda_0,w$ via gradient step on the loss (\ref{loss}) calculated by $\bar{\theta}_{\xi_j}$, $\Lambda_{\xi_j}$, and $D_{\xi_j}^{\mathrm{ts}}$, $\forall j \in \mathbb{N}_{1:J}$} 

    \caption{Offline meta-learning} 
    }
\end{algorithm}

\subsection{Mitigation of conservativeness}
To avoid unnecessary large $\beta_\xi$ defined in Section \ref{sec bound} and mitigate the conservativeness of the probabilistic bounds without compromising on safety guarantees, we can add the following regularization term in the original loss function (\ref{loss}) as proposed in \cite{ALPaCA2}. 
\begin{align}\label{reg}
\mathcal{L}_{\mathrm{reg}}^\xi(\Lambda_0) = \gamma \mathrm{Tr}(\Lambda_\xi^{-T} \Lambda_\xi^{-1})\mathrm{Tr}(\Lambda_0^{-T} \Lambda_0^{-1}),
\end{align}
where $\gamma\in \mathbb{R}_{\geq 0}$ represents the weights for the regularization term and $\mathrm{Tr}(\cdot)$ is the determinant of the argument matrix. This term corresponds to the upper bound on the term $\frac{\bar{\lambda}(\Lambda_0)}{\underline{\lambda}(\Lambda_\xi)}$ in $\beta_\xi$. Thus, minimizing (\ref{reg}) leads to small $\beta_\xi$ value.
In the experiment, we have confirmed that the regularization term with $\gamma=10^{-9}$ reasonably reduces the conservativeness of the bounds.
\subsection{Detailed Results for Meta-Learning of Unsafe Regions}\label{appendix:prediction}
We first show the negative log-likelihood (NLL) of the predictions for the cases (I) and (II) discussed in Section \ref{case study}, in Figure \ref{fig:nll}. The NLLs against the number of data points are calculated for both the proposed scheme and the case where we use Gaussian process regression (GPR) instead of ALPaCA. NLLs are calculated for 100 different obstacles and the average and 3-$\sigma$ intervals are plotted in the figures. Note that the number of data points in the figures means the number of surface points and the actual number of data points is $n_++n_-+1$ ($=7$) times of that. The squared exponential kernel is used for GP and the parameters in the kernel are fitted by solving the maximum likelihood problem. In GP implementation, the Python library GPy \cite{GPy} is used. From the figures, we can see that the prediction made by the proposed meta-learning scheme is superior to that of the GPR case when the number of data is small. The time analysis is also provided in Figure \ref{fig:time}. Figure \ref{fig:time} shows that the proposed scheme can produce the prediction very fast compared to the case in which every time fits the GP, which is especially beneficial for the usage of the online synthesis of CBF-QP since we can more frequently update the CBFs. Additional examples of the prediction are shown in Figure \ref{fig:resultAdd1} and \ref{fig:resultAdd2}.

%When the values of the confidence bounds explained in Section \ref{sec bound} are large, the resulting control performance will be conservative because of large predictive unsafe regions. 
\begin{figure*}[htbp]
  \centering
  \subfloat[Setting (I)]{\includegraphics[width=0.45\textwidth]{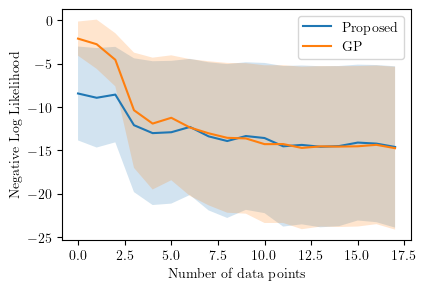}}
  \hfill
  \subfloat[Setting (II)]{\includegraphics[width=0.45\textwidth]{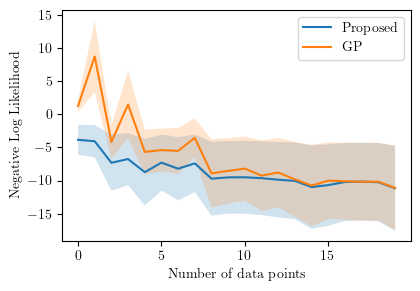}}
  \caption{NLLs for predictions of the proposed meta-learning scheme and GP case against the number of data points (Note that the number of data points in the figure means the number of surface points and the actual number of data points is $n_++n_-+1$ ($=7$) times of that.). The NLLs are collected for randomly chosen 100 different functions and the mean and $3-\sigma$ interval of NLL values are plotted in the figure.}
  \label{fig:nll}
\end{figure*}

\begin{comment}
\begin{figure*}[thbp]
\begin{center}
\subfigure[Setting (I)]{
\includegraphics[width=0.45\hsize]
{DataForRALnll1.png}}
\subfigure[Setting (II)]{
\includegraphics[width=0.45\hsize]
{DataForRALnll2.png}}
\end{center}
\caption{NLLs for predictions of the proposed meta-learning scheme and GP case against the number of data points (Note that the number of data points in the figure means the number of surface points and the actual number of data points is $n_++n_-+1$ ($=7$) times of that.). The NLLs are collected for randomly chosen 100 different functions and the mean and $3-\sigma$ interval of NLL values are plotted in the figure.}
\label{fig:nll}
\end{figure*}
\end{comment}

\begin{figure*}[thbp]
\begin{center}

\includegraphics[width=0.6\hsize]
{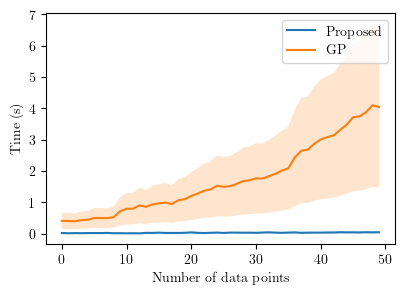}

\end{center}
\caption{The time required for predictions of the proposed meta-learning scheme and GP case against the number of data points (Note that the number of data points in the figure means the number of surface points and the actual number of data points is $n_++n_-+1$ ($=7$) times of that.). The times are collected for randomly chosen 100 different functions and the mean and $3-\sigma$ interval of NLL values are plotted in the figure.}
\label{fig:time}
\end{figure*}

\begin{figure*}[htbp]
  \centering
  \subfloat[Additional visualization for case (I)]{\includegraphics[width=1\textwidth]{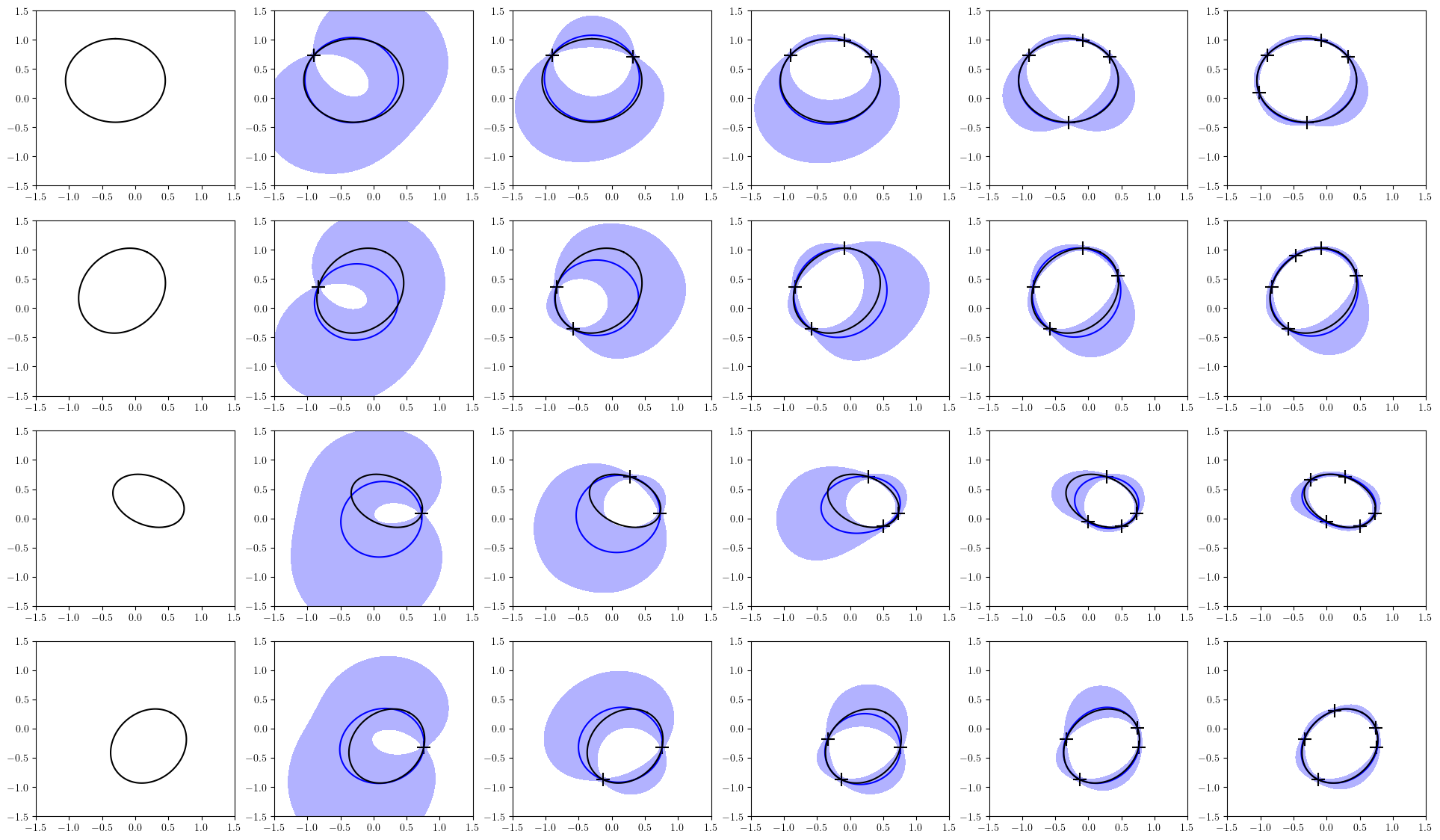}\label{fig:resultAdd1}}
  \hfill
  \subfloat[Additional visualization for case (II)]{\includegraphics[width=1\textwidth]{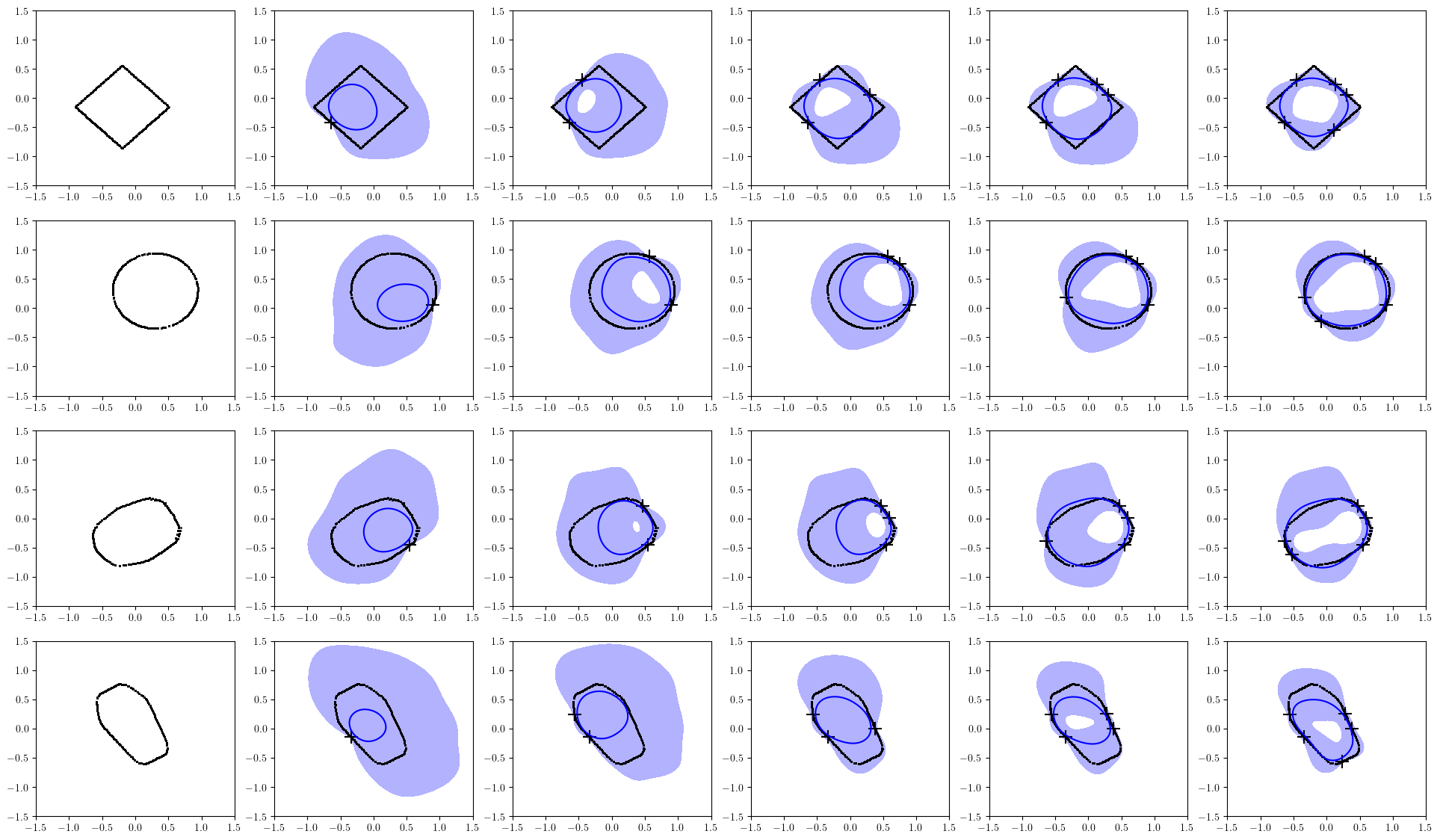}\label{fig:resultAdd2}}
\end{figure*}
\begin{comment}
\begin{figure*}[thbp]
  \centering
  \begin{subfigure}{\textwidth}
    \includegraphics[width=1\hsize]{DataForRALadd1.png}
    \caption{Additional visualization for case (i)}
    \label{fig:resultAdd1}
  \end{subfigure}
  \hfill
  \begin{subfigure}{\textwidth}
    \includegraphics[width=1\hsize]{DataForRALadd2.png}
    \caption{Additional visualization for case (ii)}
    \label{fig:resultAdd2}
  \end{subfigure}
\end{figure*}
\end{comment}
\begin{comment}
\begin{figure*}[h]
 \begin{center}
  %\includegraphics[width=1\hsize]{encoder-decoder.pdf}
  \includegraphics[width=1\hsize]{DataForRALadd1.png}
 \end{center}
 \caption{Additional results for case (i)}
 \label{fig:resultAdd1}
\end{figure*}

\begin{figure*}[tb]
 \begin{center}
  %\includegraphics[width=1\hsize]{encoder-decoder.pdf}
  \includegraphics[width=1\hsize]{DataForRALadd2.png}
 \end{center}
 \caption{Additional results for case (ii)}
 \label{fig:resultAdd2}
\end{figure*}
\end{comment}

%\newpage
\subsection{Detailed Results for Control Execution}\label{appendix:control}
Next, we summarize the detailed results for control execution. In Figures \ref{fig:execution add1}-- \ref{fig:execution add5}, we show the resulting trajectories and the time evolution of the squared errors between the robot positions and the goal positions when $\Delta_{\mathrm{lidar}}=5$, for all 6 environments.
From these figures, we can see that the tight prediction of the proposed scheme leads to less conservative control performance as explained in Section \ref{case study}.
In Table \ref{CSE}, we also summarize the cumulative squared errors between the robot position and goal position for all environments and the cases $\Delta_{\mathrm{lidar}}=1,3,5$ [s].
%From these results, we can see the effectiveness of the proposed meta-learning scheme as explained in Section \ref{}.
\begin{figure}[htbp]
  \centering
  \subfloat{\includegraphics[width=0.95\textwidth]{DataForRALpropose.png}}
  \hfill
  \subfloat{\includegraphics[width=0.95\textwidth]{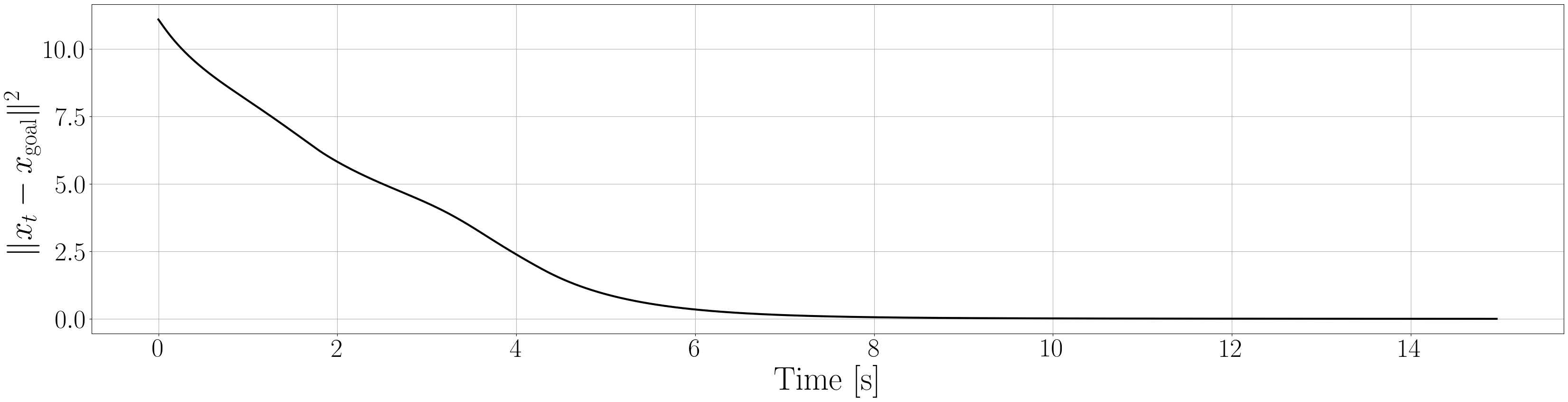}}
  \hfill
  \subfloat{\includegraphics[width=0.95\textwidth]{DataForRALgp.png}}
  \hfill
  \subfloat{\includegraphics[width=0.95\textwidth]{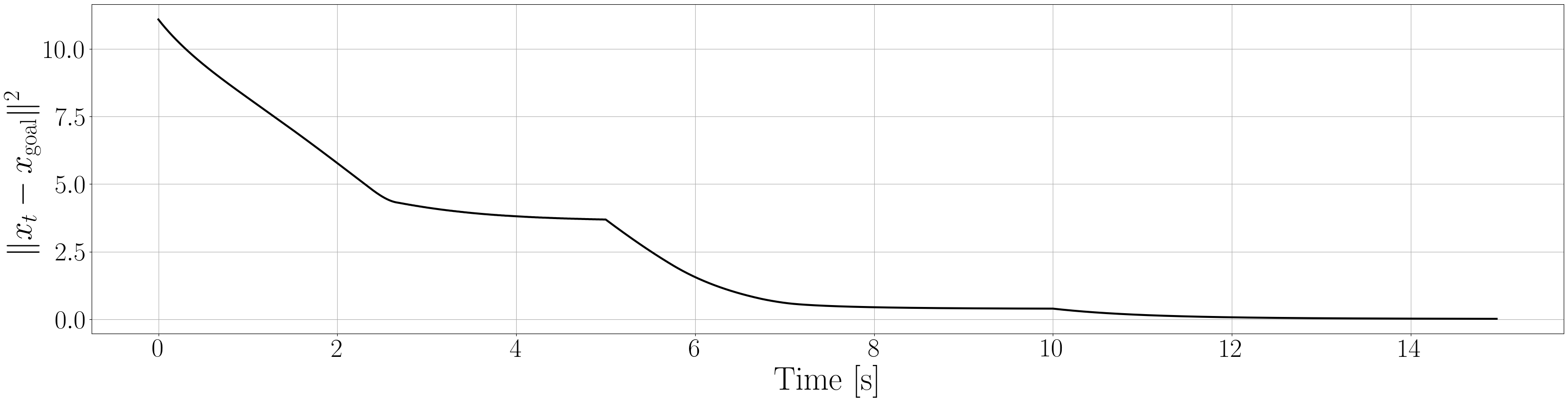}}
  \caption{A control execution in Environment 1 with the proposed method (above) and GP case (below). The red crosses, heat maps, blue solid lines, black solid lines, and black dotted lines represent data points on the surface, the values of CBF, 0-level sets of 95\% lower bounds of CBF, actual surfaces, and robot trajectories, respectively. The prediction of CBF is updated every 5 [s]. The time evolution of the squared errors between goal and system states is also shown in the figure.}
  \label{fig:execution add1}
\end{figure}

\begin{figure}[htbp]
  \centering
  \subfloat{\includegraphics[width=0.95\textwidth]{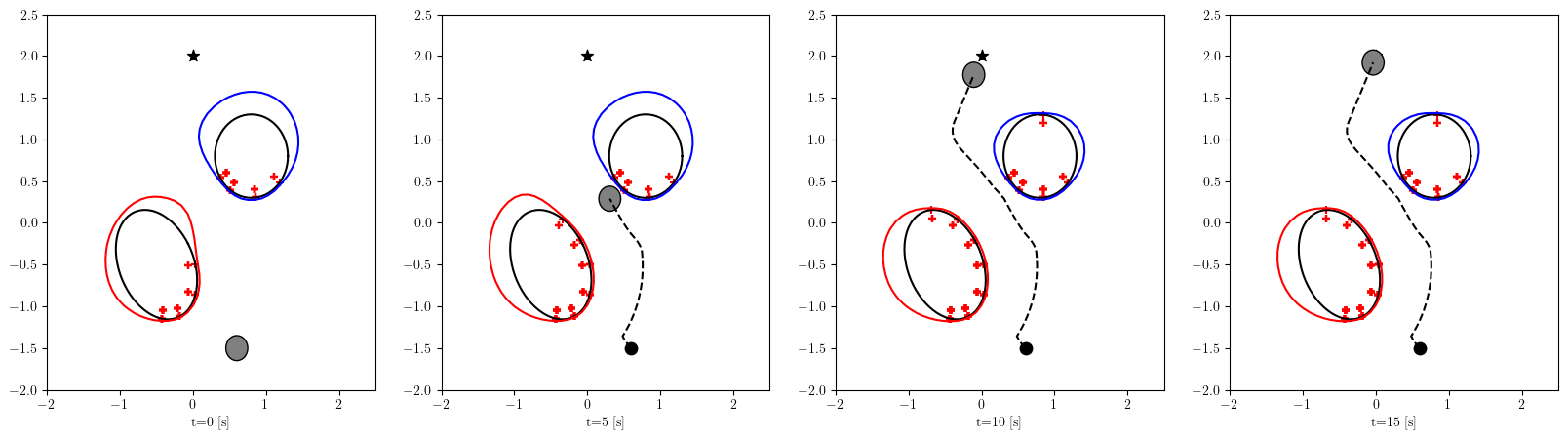}}
  \hfill
  \subfloat{\includegraphics[width=0.95\textwidth]{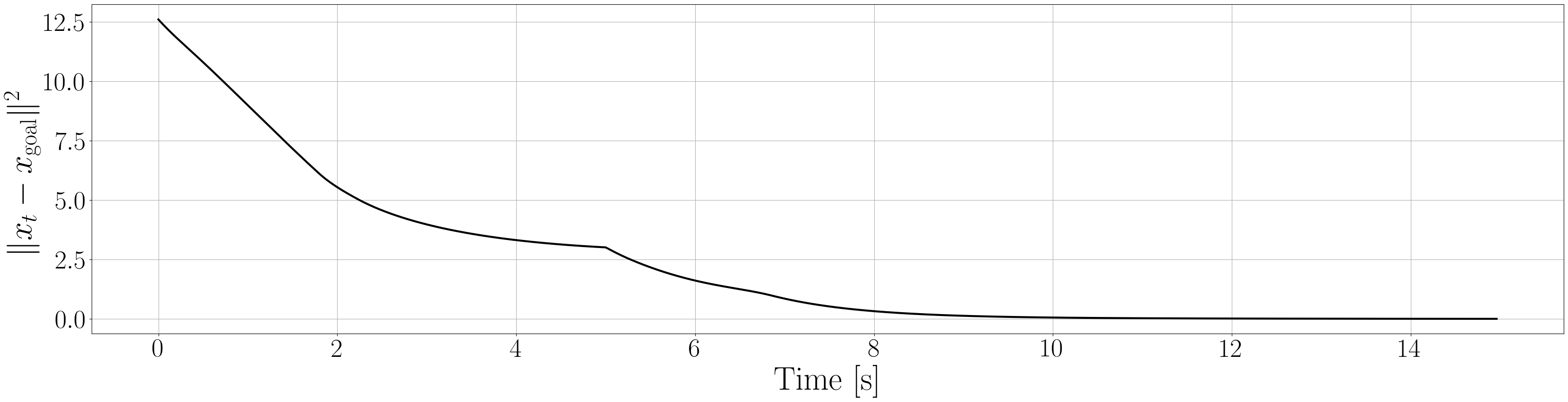}}
  \hfill
  \subfloat{\includegraphics[width=0.95\textwidth]{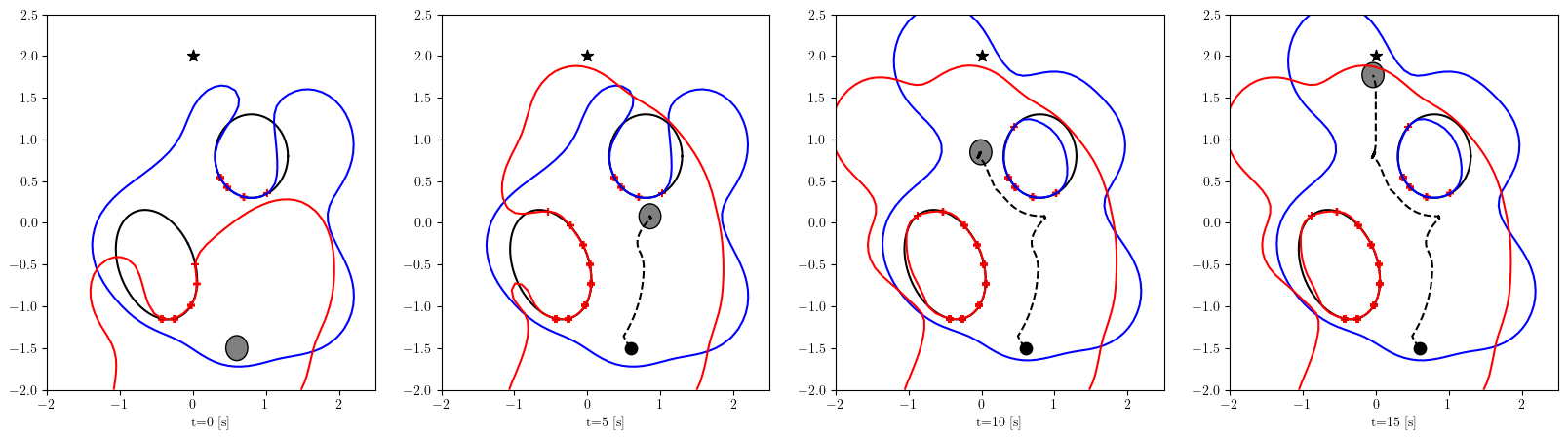}}
  \hfill
  \subfloat{\includegraphics[width=0.95\textwidth]{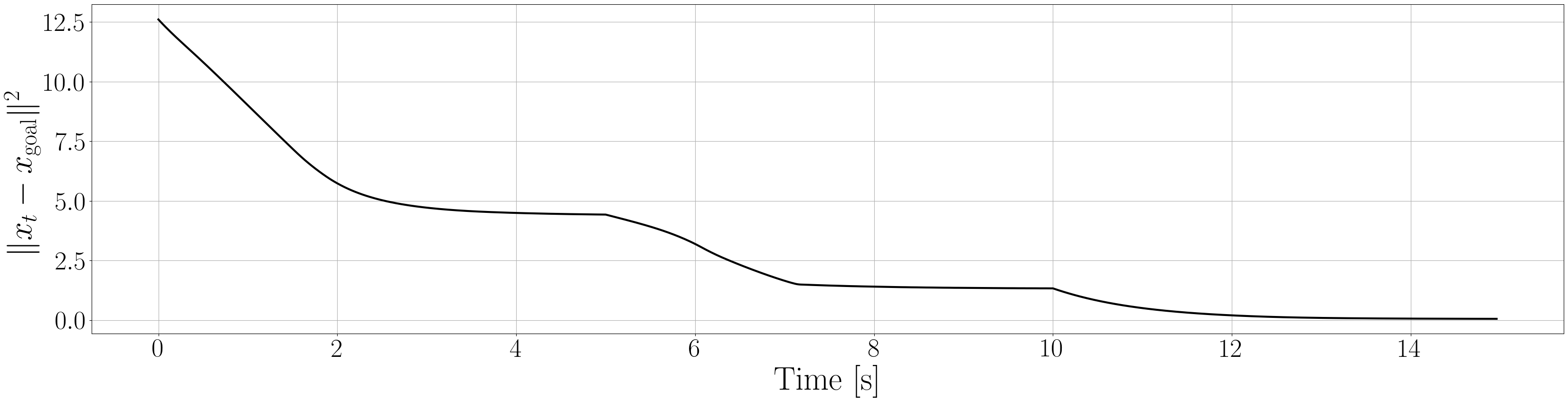}}
  \caption{A control execution in Environment 2 with the proposed method (above) and GP case (below). The red crosses, blue/red solid lines, black solid lines, and black dotted lines represent data points on the surface,  0-level sets of 95\% lower bounds of CBF, actual surfaces, and robot trajectories, respectively. The prediction of CBF is updated every 5 [s]. The time evolution of the squared errors between goal and system states is also shown in the figure.}
  \label{fig:execution add2}
\end{figure}

\begin{figure}[htbp]
  \centering
  \subfloat{\includegraphics[width=0.95\textwidth]{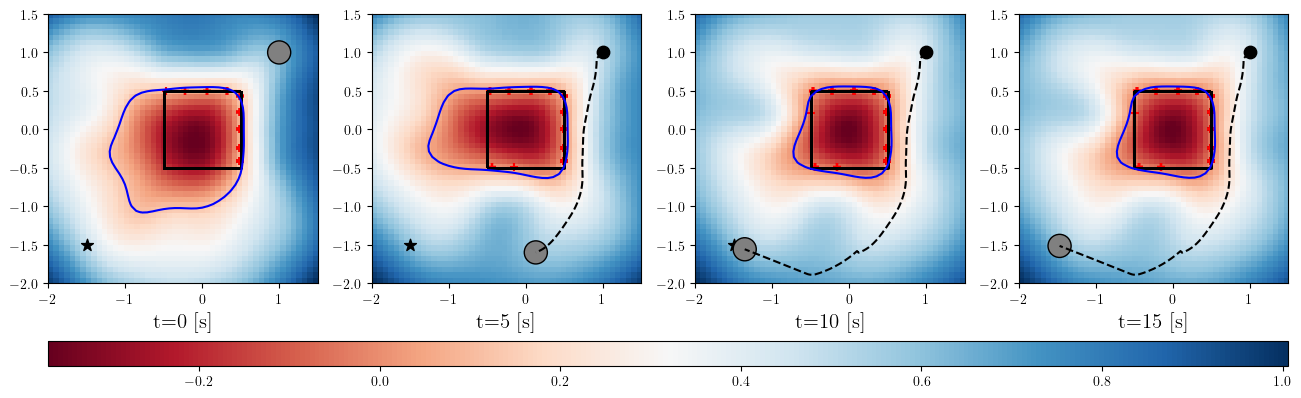}}
  \hfill
  \subfloat{\includegraphics[width=0.95\textwidth]{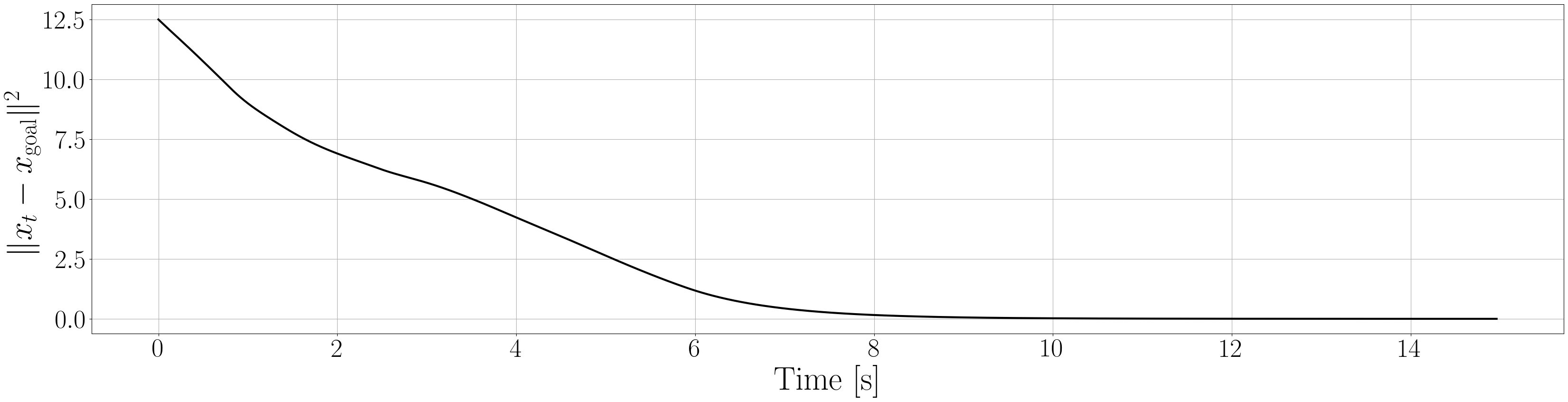}}
  \hfill
  \subfloat{\includegraphics[width=0.95\textwidth]{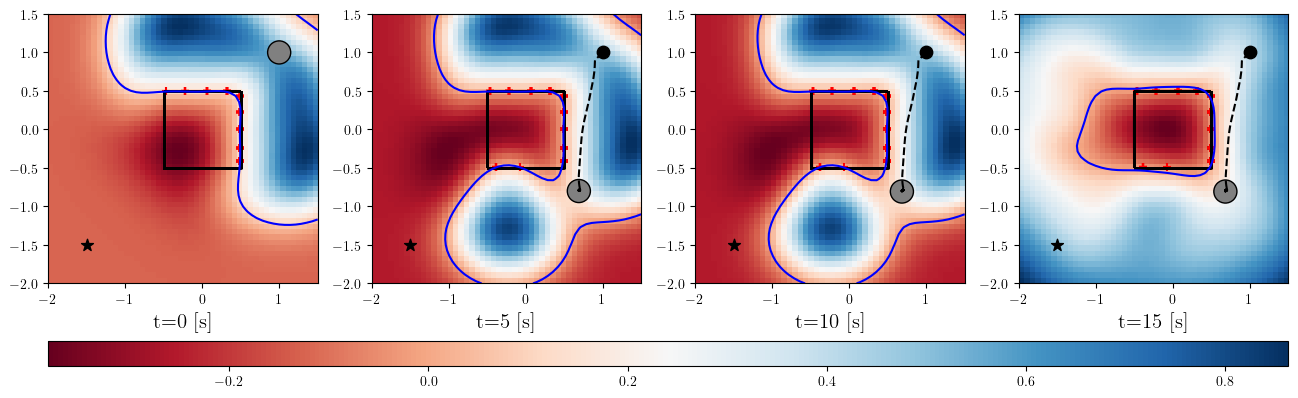}}
  \hfill
  \subfloat{\includegraphics[width=0.95\textwidth]{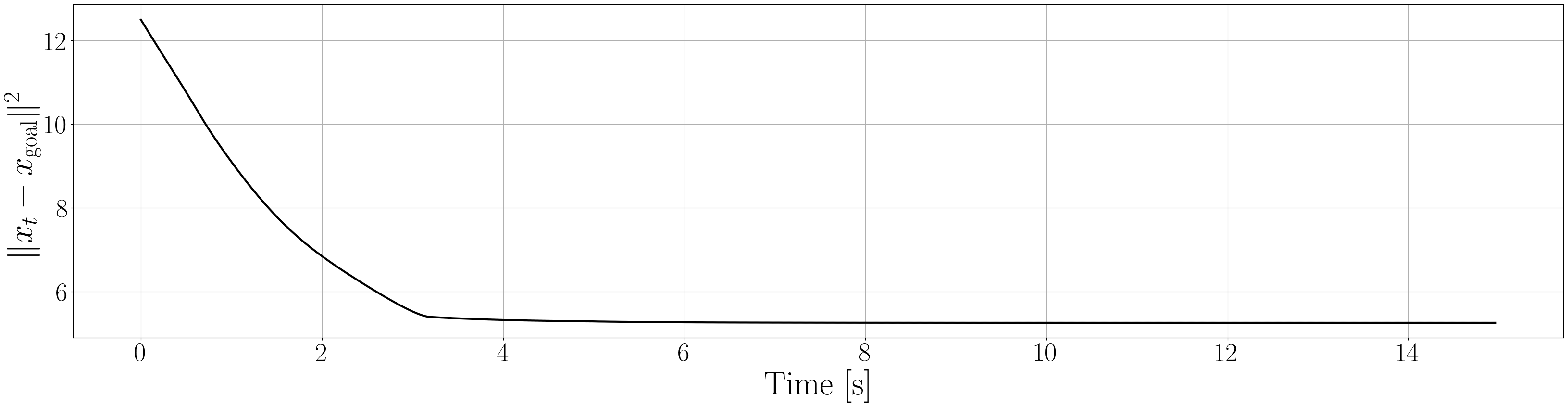}}
  \caption{A control execution in Environment 3. The prediction of CBF is updated every 5 [s].}
  \label{fig:execution add3}
\end{figure}

\begin{figure}[htbp]
  \centering
  \subfloat{\includegraphics[width=0.95\textwidth]{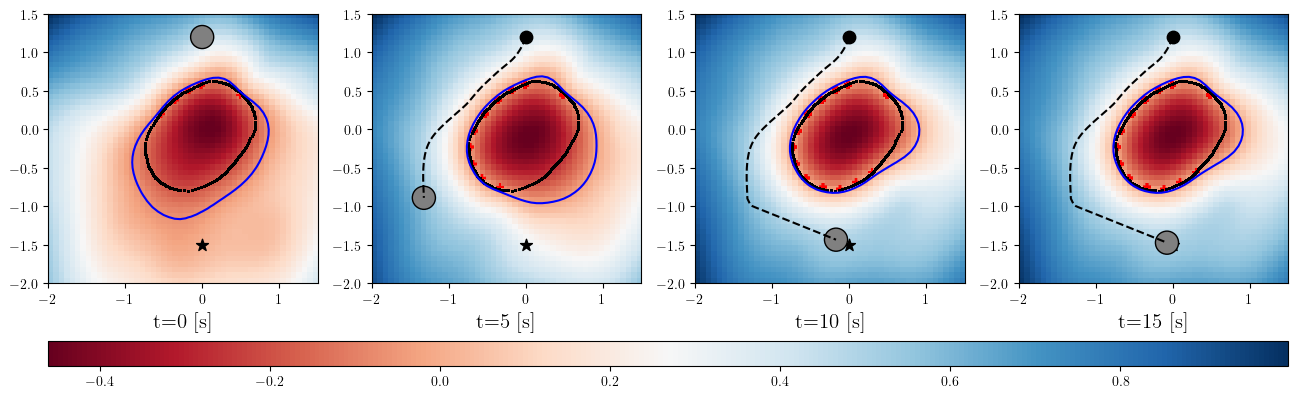}}
  \hfill
  \subfloat{\includegraphics[width=0.95\textwidth]{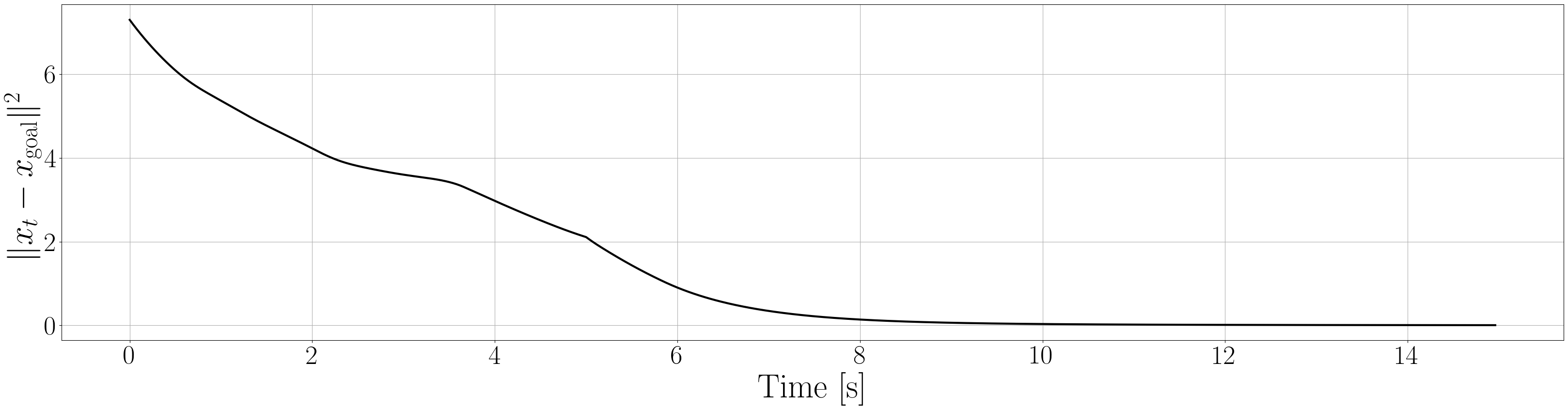}}
  \hfill
  \subfloat{\includegraphics[width=0.95\textwidth]{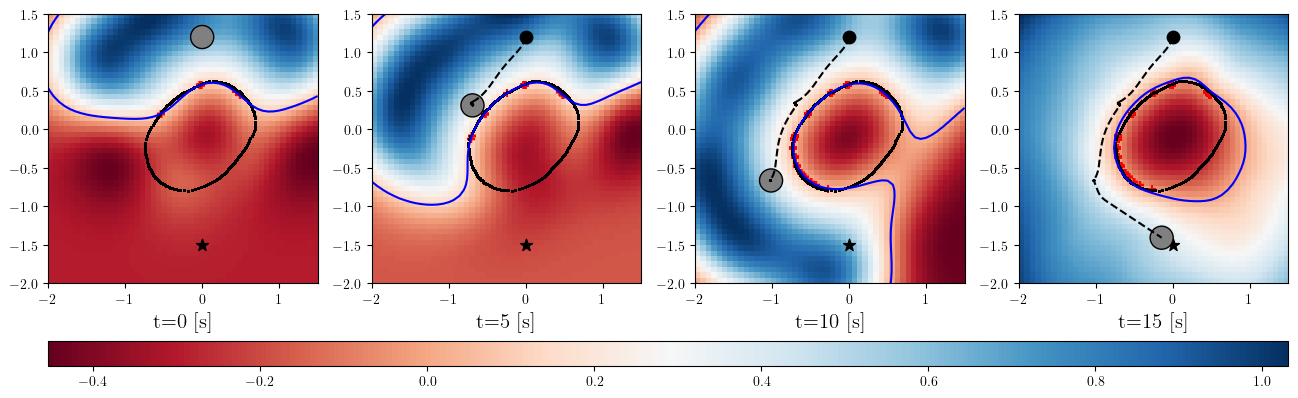}}
  \hfill
  \subfloat{\includegraphics[width=0.95\textwidth]{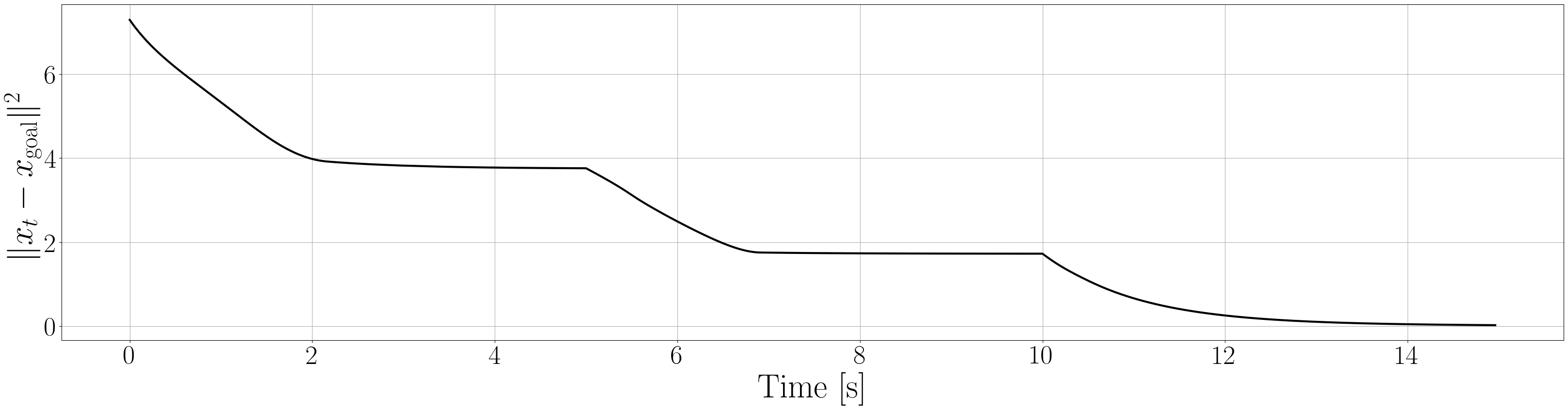}}
  \caption{A control execution in Environment 4. The prediction of CBF is updated every 5 [s].}
  \label{fig:execution add4}
\end{figure}

\begin{figure}[htbp]
  \centering
  \subfloat{\includegraphics[width=0.95\textwidth]{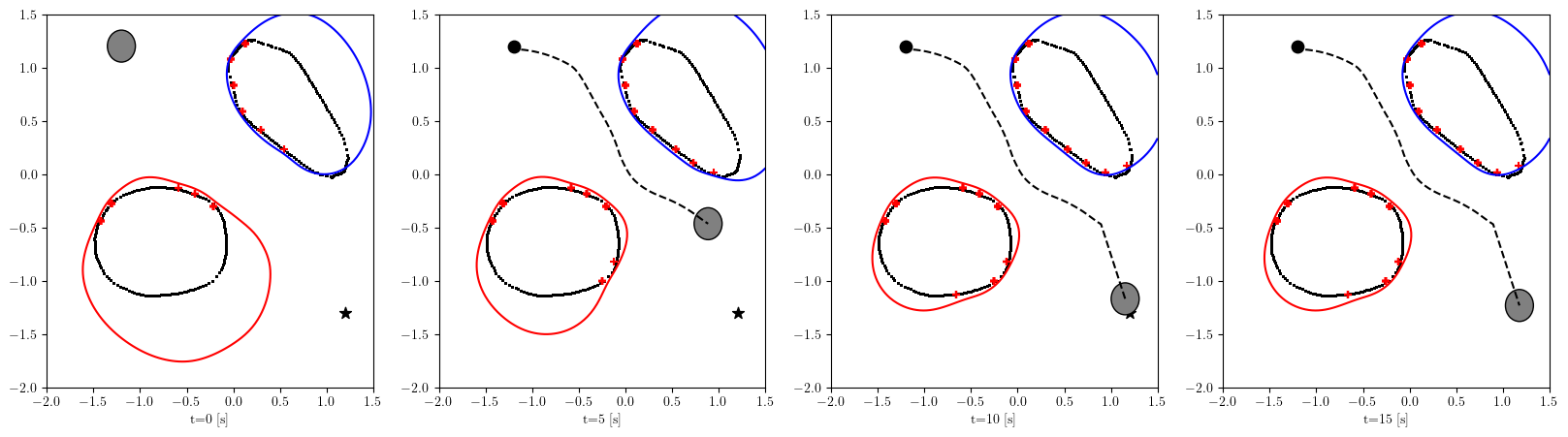}}
  \hfill
  \subfloat{\includegraphics[width=0.95\textwidth]{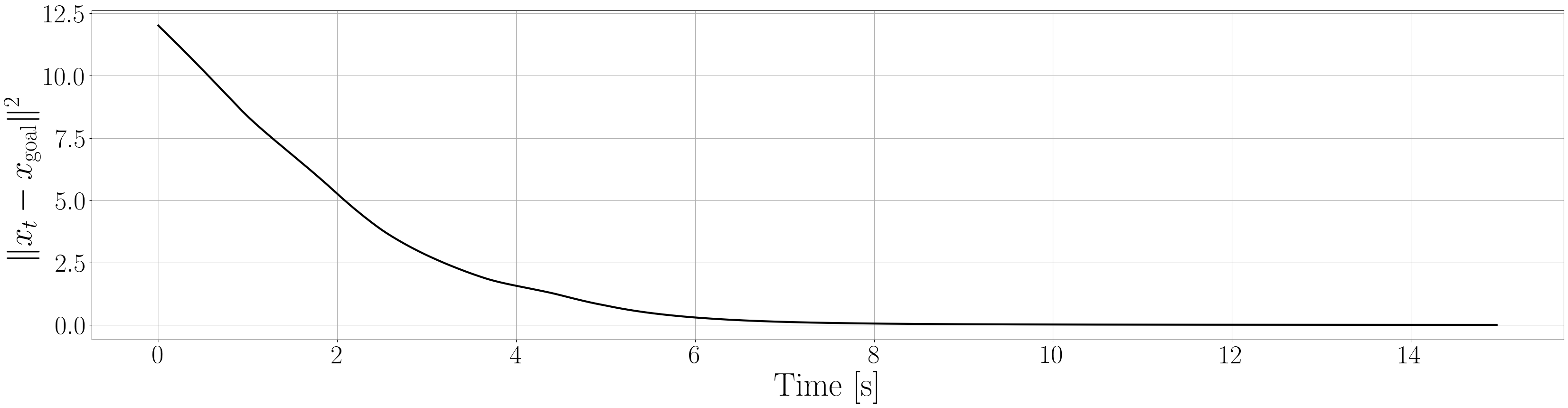}}
  \hfill
  \subfloat{\includegraphics[width=0.95\textwidth]{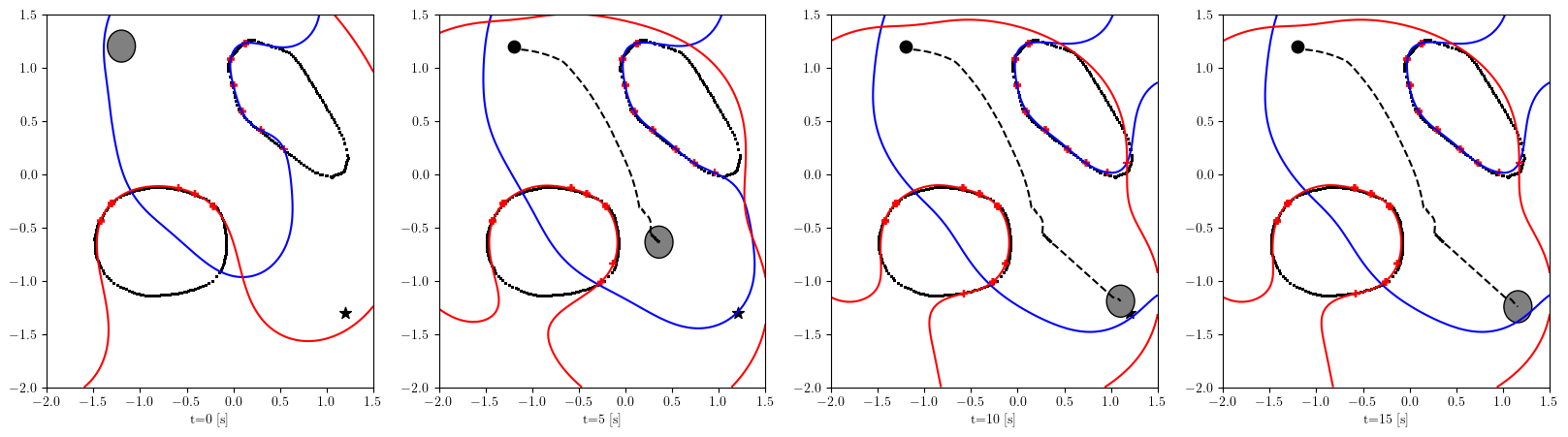}}
  \hfill
  \subfloat{\includegraphics[width=0.95\textwidth]{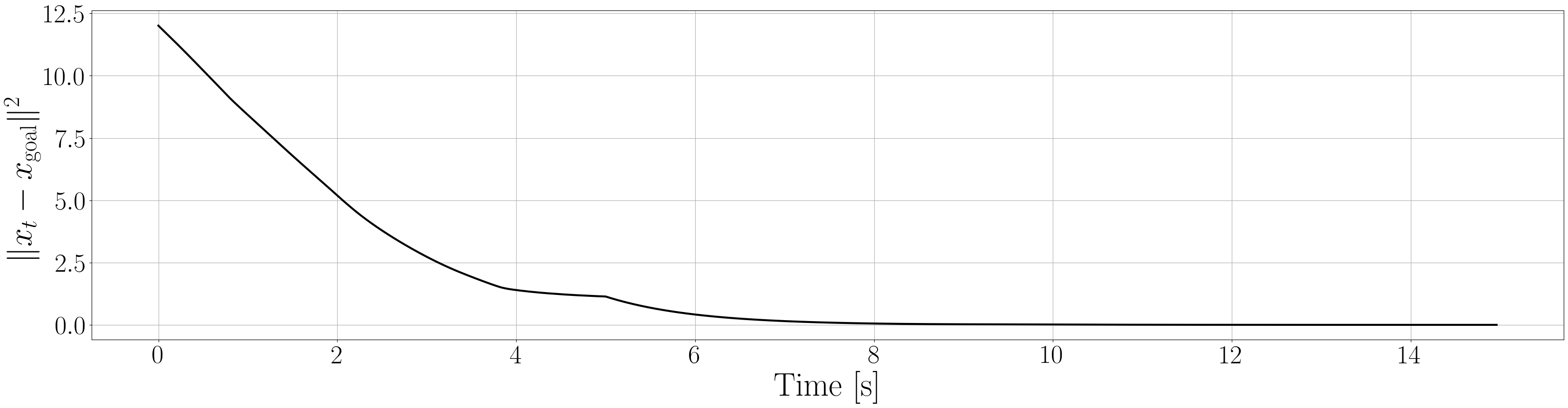}}
  \caption{A control execution in Environment 5. The prediction of CBF is updated every 5 [s].}
  \label{fig:execution add5}
\end{figure}

\begin{figure}[htbp]
  \centering
  \subfloat{\includegraphics[width=0.95\textwidth]{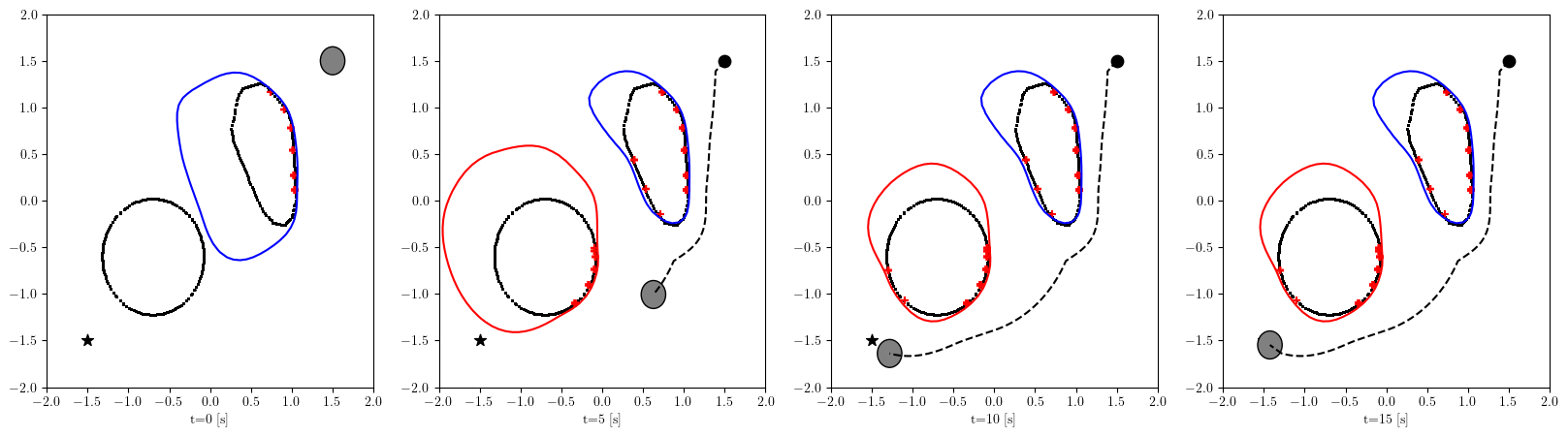}}
  \hfill
  \subfloat{\includegraphics[width=0.95\textwidth]{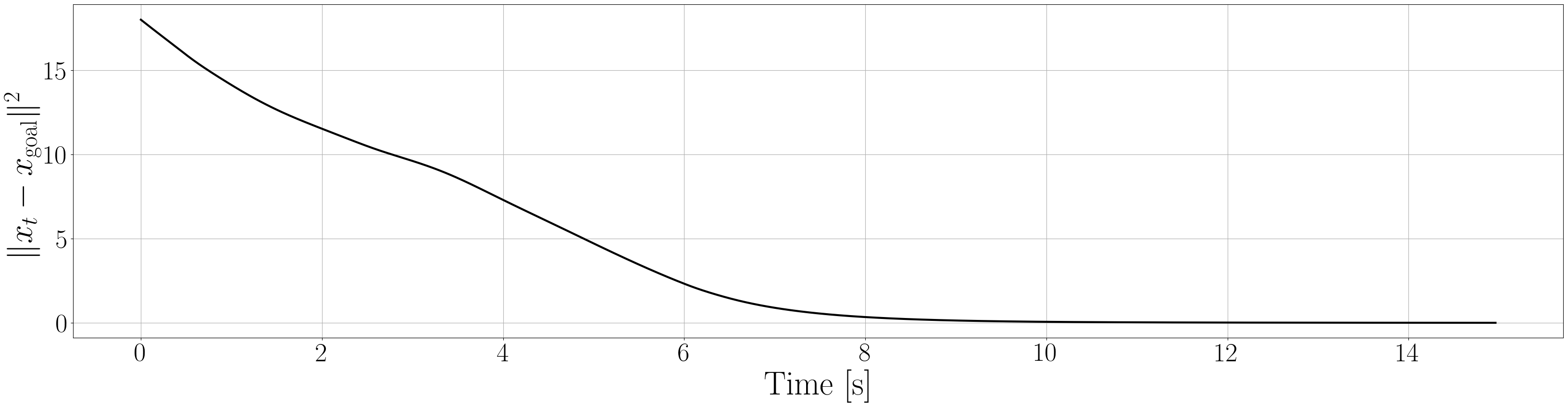}}
  \hfill
  \subfloat{\includegraphics[width=0.95\textwidth]{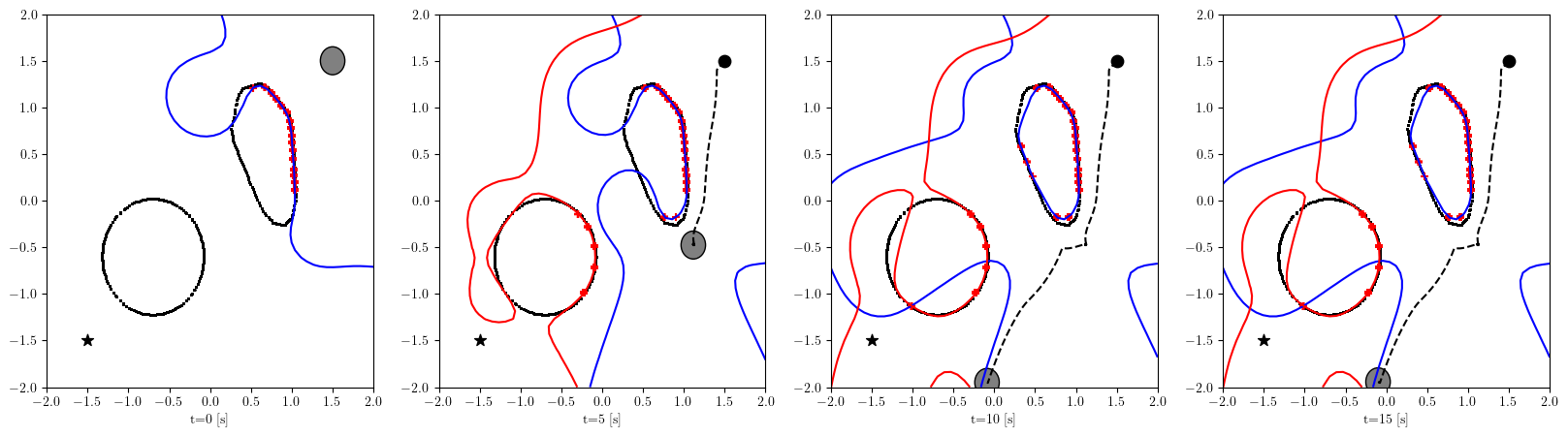}}
  \hfill
  \subfloat{\includegraphics[width=0.95\textwidth]{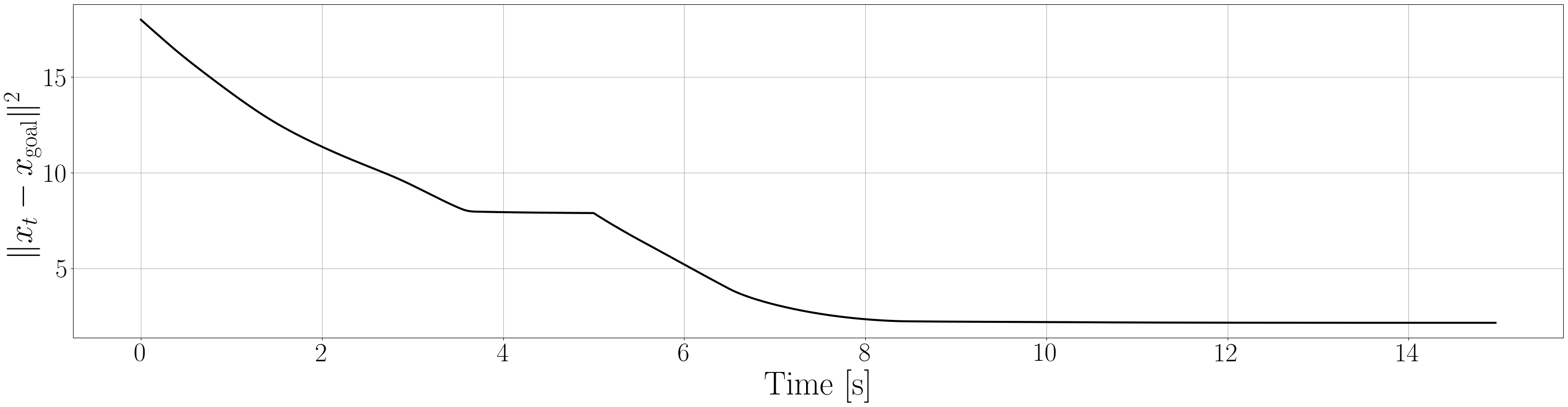}}
  \caption{A control execution in Environment 6. The prediction of CBF is updated every 5 [s].}
  \label{fig:execution add6}
\end{figure}

\newpage
\begin{table}[]

\centering
\caption{The cumulative squared errors between the robot position and the goal position}
\begin{tabular}{c lll lll}\label{CSE}
\\ \hline
\multicolumn{1}{|l|}{} & \multicolumn{3}{|c}{Proposed} & \multicolumn{3}{|c|}{GP}                                       \\ \hline
\multicolumn{1}{|l|}{Environment / $\Delta_{\mathrm{lidar}}$} & \multicolumn{1}{c|}{1}     & \multicolumn{1}{c|}{3}     & \multicolumn{1}{|c}{5} & \multicolumn{1}{|c|}{1}     & \multicolumn{1}{c|}{3}     & \multicolumn{1}{c|}{5} \\ \hline
\multicolumn{1}{|c|}{1}                                   & \multicolumn{1}{l|}{554.93} & \multicolumn{1}{l|}{620.12} & 618.56                  & \multicolumn{1}{|l|}{543.78} & \multicolumn{1}{l|}{715.75} & \multicolumn{1}{l|}{1010.93}               \\ \hline
\multicolumn{1}{|c|}{2}                                    & \multicolumn{1}{l|}{654.10} & \multicolumn{1}{l|}{702.75} & 830.22                  & \multicolumn{1}{|l|}{697.65} & \multicolumn{1}{l|}{818.47} & \multicolumn{1}{l|}{1094.79}               \\ \hline
\multicolumn{1}{|c|}{3}                                    & \multicolumn{1}{l|}{534.40} & \multicolumn{1}{l|}{585.63} & 654.25                  & \multicolumn{1}{|l|}{533.77} & \multicolumn{1}{l|}{609.42} & \multicolumn{1}{l|}{897.31}               \\ \hline
\multicolumn{1}{|c|}{4}                                    & \multicolumn{1}{l|}{794.06} & \multicolumn{1}{l|}{794.06} & 843.25                 & \multicolumn{1}{|l|}{1792.50} & \multicolumn{1}{l|}{1792.50} & \multicolumn{1}{l|}{1803.99}               \\ \hline
\multicolumn{1}{|c|}{5}                                   & \multicolumn{1}{l|}{475.41} & \multicolumn{1}{l|}{475.41} & 486.61                & \multicolumn{1}{|l|}{472.43} & \multicolumn{1}{l|}{472.43} & \multicolumn{1}{l|}{503.12}               \\ \hline

\multicolumn{1}{|c|}{6}                                    & \multicolumn{1}{l|}{1068.52} & \multicolumn{1}{l|}{1068.56} & 1076.81                  & \multicolumn{1}{|l|}{1271.35} & \multicolumn{1}{l|}{1271.32} & \multicolumn{1}{l|}{1648.49}               \\ \hline

\end{tabular}
\end{table}

\begin{comment}
\section{Biography Section}
If you have an EPS/PDF photo (graphicx package needed), extra braces are
 needed around the contents of the optional argument to biography to prevent
 the LaTeX parser from getting confused when it sees the complicated
 $\backslash${\tt{includegraphics}} command within an optional argument. (You can create
 your own custom macro containing the $\backslash${\tt{includegraphics}} command to make things
 simpler here.)
 
\vspace{11pt}

\bf{If you include a photo:}\vspace{-33pt}
\begin{IEEEbiography}[{\includegraphics[width=1in,height=1.25in,clip,keepaspectratio]{}}]{Michael Shell}
Use $\backslash${\tt{begin\{IEEEbiography\}}} and then for the 1st argument use $\backslash${\tt{includegraphics}} to declare and link the author photo.
Use the author name as the 3rd argument followed by the biography text.
\end{IEEEbiography}

\vspace{11pt}

\bf{If you will not include a photo:}\vspace{-33pt}
\begin{IEEEbiographynophoto}{John Doe}
Use $\backslash${\tt{begin\{IEEEbiographynophoto\}}} and the author name as the argument followed by the biography text.
\end{IEEEbiographynophoto}

\vfill
\end{comment}
\end{document}